\newtheorem{lemma}{{Lemma}}
\begin{document}
\title{UAV-Enabled ISAC: Towards On-Demand Sensing Services and Enhanced Communication
}
\author[$\dag$]{Xiaopeng Yuan, Peng Wu, Xinran Wang, Yulin Hu, and Anke Schmeink
\thanks{X. Yuan and A. Schmeink are with Chair of Information Theory and Data Analytics, RWTH Aachen University, 52074 Aachen, Germany. (Email: $yuan|schmeink$@inda.rwth-aachen.de).} 
\thanks{P. Wu, X. Wang and Y. Hu are with School of  Electronic Information, Wuhan University, 430072 Wuhan, China.
(Email: $peng.wu|xinran.wang|yulin.hu$@whu.edu.cn).
}
\vspace{-.5cm}
}
\maketitle
 
\vspace{-1.cm}

\begin{abstract}
In this paper, we investigate an integrated sensing-and-communication (ISAC) network enabled by an unmanned aerial vehicle (UAV). The UAV is supposed to fly along a periodical circular trajectory at a fixed height for ISAC service supply from the sky. We consider on-demand sensing services, where on-demand detection and on-demand localization requests may be activated at any time toward any position within the targeted serving region. While guaranteeing satisfactory accuracy for both on-demand sensing tasks, we aim at maximizing the minimum achievable throughput among all communication users, via joint optimizing the UAV trajectory and communication user scheduling. To address the complicated problem with infinite sensing constraints, 
we characterize the on-demand detection  constraint as a restricted deployment area for UAV and the on-demand localization constraint as Cramér-Rao Bound (CRB) constraints over finite reference target points, based on which 
the original problem is simplified to more tractable one.  Afterwards, particularly aiming to ensure no violations of CRB constraints, we propose a convex approximation for the reformulated problem, where tight approximation is guaranteed at given local solution. The construction strategy for convex problem approximation allows an efficient iterative algorithm with verified convergence to a superior suboptimal solution. At last, with simulations,  we verified the applicability of our developed optimization scheme in strictly fulfilling the on-demand sensing constraints and the effectiveness of our proposed solution for simultaneously enhancing the communication throughput in  UAV-enabled ISAC.%
\end{abstract}
\vspace{-.5cm}
\begin{IEEEkeywords}
Integrated sensing and communication (ISAC), unmanned aerial vehicle (UAV), on-demand sensing, throughput maximization, trajectory design.
\end{IEEEkeywords}
\vspace{-.1cm}
\section{Introduction}

With the advancement of wireless networks, 
integrated sensing and communication (ISAC) technologies have emerged as a critical driver for numerous applications, like autonomous vehicles, smart home and extended reality (XR)~\cite{isac_1}, and thus been identified as one of the key enabling technologies for the 6G era~\cite{isac_2}. Particularly, instead of independent deployments of mobile radio and radar systems in conventional networks, the integration of sensing and communication brings huge benefits~\cite{isac_3}, including low hardware cost, reduced power consumption and signaling latency. 
Towards fully leveraging these substantial advantages of ISAC, enormous research interests have arisen, e.g., in modulation and coding scheme designs~\cite{isac_4,isac_5}, and waveform designs~\cite{isac_6,isac_7,isac_8}. Extensively, the works \cite{isac_9,isac_10} investigated advanced beamforming strategies, together with smart radio environment management via reconfigurable intelligent surfaces (RIS), 
for the promotion of ISAC networks. 

In accordance with specific application demands, various sensing tasks have also been widely studied, such as target detections, distance estimation and scene recognition~\cite{isac_11}. Among them, target localization plays an important role for the future wireless networks~\cite{local_1}, enabling accurate wireless service delivery and precise target monitoring. Integrating target localization into communication, numerous researches have been conducted with respect to advanced multiple access schemes~\cite{local_2}, effective beamforming designs~\cite{local_3} and efficient resource allocation~\cite{local_4}, to deal with the fundamental tradeoff between localization accuracy and communication performance.

Furthermore, with access capability from the sky, unmanned aerial vehicles (UAVs) have shown tremendous promotion potentials in multifarious wireless applications~\cite{uav_1,add_1}. Within UAV-enabled ISAC network, the superior air-to-ground links will efficiently facilitate both communication and sensing services, and elevate the integration gain of ISAC~\cite{add_2}. Considering signal-to-noise ratio (SNR) based sensing metric, the works \cite{uav_2,uav_3,uav_4} investigated joint UAV trajectory and beamforming designs for ISAC performance enhancement, while a joint UAV trajectory and resource allocation design has been proposed in~\cite{uav_5} for a multi-target scenario. 
Referring to target localization services, 
achieving a high localization accuracy requires the collaboration of multiple stations/agents~\cite{isac_1,local_1}, employing the corresponding high spatial diversity. 
In UAV-enabled ISAC networks, such spatial diversity can be realized via the flexible deployment of UAV, making UAV a promising technique for the wireless positioning services~\cite{uav_6}. 
Nevertheless, since the localization accuracy  depends on the varying UAV position,  the existing works \cite{uav_2,uav_3,uav_4,uav_5} purely relying on an SNR-based sensing metric, become inapplicable for UAV-enabled ISAC with localization services.

To deal with this issue, two recent works \cite{Jing_1,Gu_1} have derived out the Cramér-Rao Bound (CRB) of target localization in UAV-enabled ISAC,  and proposed UAV trajectory designs respectively maximizing the energy efficiency and localization accuracy. However, to tackle the highly sophisticated localization CRB, 
both works \cite{Jing_1,Gu_1} have adopted the first-order Taylor approximation for CRB and \emph{approximately optimized} the  objectives. 
Such simplex approximation introduces considerable approximation errors in CRB evaluations. Especially when the UAV-enabled ISAC network is expected to provide a quality-of-service (QoS) guarantee for localization, 
the adopted approximation approach in \cite{Jing_1,Gu_1} will lead to a solution violating the 
accuracy constraints. Therefore, a more advanced optimization strategy is highly recommended to precisely cope with localization CRB in UAV-enabled ISAC networks. In addition, as the sensing targets are very likely not pre-known by UAV, UAV-enabled ISAC networks are also expected to have the capability addressing on-demand sensing services. So far, to the best of our knowledge, a reliable optimization methodology for dealing with QoS-guaranteed localization in UAV-enabled ISAC, as well as the research investigation of on-demand sensing services in UAV-enabled ISAC, is still missing in the literature. 


In this work, we are motivated to consider a UAV-enabled ISAC network with on-demand sensing services, involving on-demand detection and on-demand localization, and aim to develop an effective optimization methodology addressing sensing QoS guarantee in UAV-enabled ISAC. Referring to on-demand sensing, we consider a targeted service region where detection and localization requests may arise toward any position at any time. We apply a UAV with periodic circular trajectory, to supply downlink communication to multiple users and reuse the communication signal for on-demand sensing services. Taking communication fairness into account, the objective is to maximize the minimum throughput among all communication users via jointly designing UAV trajectory and user scheduling scheme, while ensuring satisfactory accuracy levels for the on-demand sensing requests. The main contributions of this work are summarized as following.
\begin{itemize}
    \item {\bf UAV-enabled ISAC with on-demand sensing}: In this work, we have investigated a UAV-enabled ISAC network where the sensing requests are on-demand. 
    For both on-demand detection and on-demand localization, we have characterized the corresponding accuracy constraints respectively as a feasible deployment area for UAV and localization CRB constraints over finite reference sensing targets, to facilitate the overall network design. 
    \item {\bf Localization CRB guarantee in UAV-enabled ISAC}: Towards strict QoS guarantee for localization services, 
    we have devised a new approximation scheme for localization CRB. In particular, our constructed approximation 
    ensures convexity and lower-bound property with respect to the original CRB, such that efficient convex optimization tools can be applied in  the network designs without violating CRB constraints. 
    \item {\bf Iterative algorithm for communication enhancement}: To 
    maximize the throughput performance under these on-demand sensing accuracy constraints, we proposed an iterative algorithm for an efficient suboptimal solution of 
    joint UAV trajectory and user scheduling design. The designed process for constructing convex approximation of the whole problem has ensured a steady convergence of our proposed iterative algorithm.
    \item {\bf Numerical validations}: Via numerical results and comparisons with benchmarks, we particularly validated our developed optimization scheme for effectively dealing with localization CRB constraints and also highlighted the considerable benefits of our proposed solution in communication enhancement. 
\end{itemize}

The remaining sections are 
organized as follows. In Section~II, we state the system model and formulate the optimization problem. In Section~III, we characterize the accuracy constraints for on-demand sensing services and reformulate the problem to a simplified one, which is addressed in Section IV via a proposed iterative algorithm. Finally, the proposed solution is numerically evaluated in Section V and the overall achievements are summarized in Section VI.




\section{System Model}
\begin{figure}[t]
	\centering
	\includegraphics[width=0.6\linewidth, trim= 0 20 0 20]{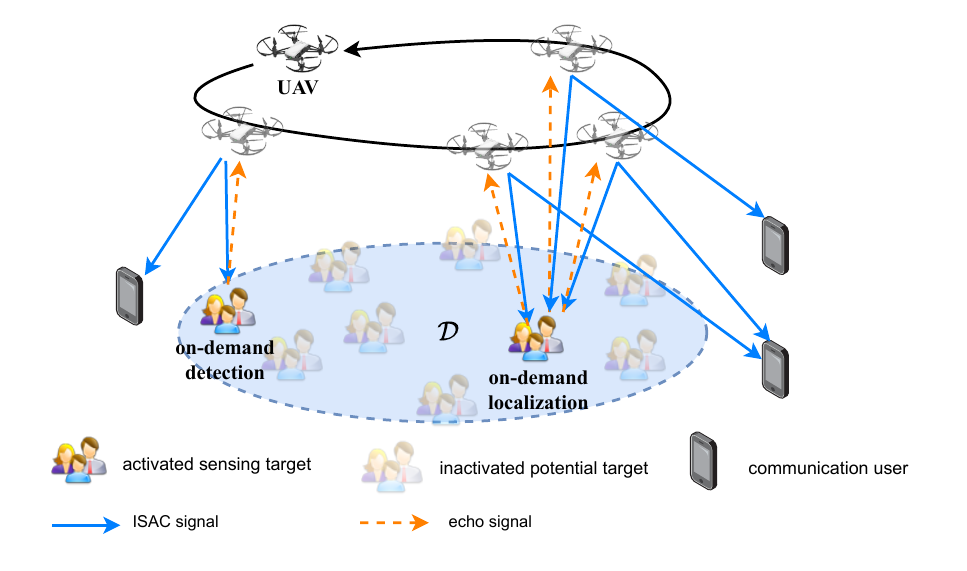}
	\caption{Example of UAV-enabled ISAC network with on-demand sensing.}\label{fig:scenario}
		\vspace{-.5cm}
\end{figure}

In this work, we explore a UAV-enabled ISAC network, where the UAV supports wireless communication to multiple ground users and simultaneously offers on-demand sensing services toward ground targets. 
In other words, 
the sensing tasks in our considered ISAC network may arise at anytime toward any target point in the serving area, as indicated in Fig.~\ref{fig:scenario}. 
As a network example, in post-disaster scenarios where terrestrial base stations are likely offline due to damage, the UAV can serve as a vital temporary base station. 
While transmitting emergency data to multiple ground users, the UAV can leverage ISAC technologies and reuse the emitted RF signal for sensing the disaster area. 
Depending on the real-time needs, 
the on-demand sensing requests at UAV may involve detecting and localizing forest fires, monitoring lava for disaster precaution, and identifying survivals for rescue supports. These on-demand sensing targets are highly likely not antecedently known by the UAV. Beyond post-disaster scenarios, the example applications of our considered network 
also include UAV-enabled monitoring in vehicular networks and autonomous factory, for on-demand detecting and localizing vehicle accidents and robot defects.

We assume the UAV is deployed at fixed altitude $H$ for obstacle avoidance and delivers ISAC services to ground networks following a periodic circular trajectory. Denote by $T$ the operation period for UAV. To facilitate the UAV trajectory design, 
we divide the operation period $T$ into $N$ shorter slots with equal slot length $\delta=\frac{T}{N}$, while in each slot 
the UAV position 
can be assumed to be quasi-static. We represent by $(x_n,y_n)$ the position of UAV in slot $n\in\{1,...,N\}\triangleq \mathcal N$. The UAV speed is limited by a maximum $V$. 
For notation simplicity, we define $(x_{N+n},y_{N+n})=(x_n,y_n)$, $\forall n\in\mathcal N$, such that the UAV speed constraint can be formulated as\vspace{-.1cm}
\begin{equation}\vspace{-.1cm}
	(x_{n+1}-x_n)^2+(y_{n+1}-y_n)^2\leq \delta^2V^2,~\forall n\in\mathcal N.
\end{equation}
Note that when $n=N$, we have $(x_1-x_N)^2+(y_1-y_N)^2\leq \delta^2V^2$, which ensures the UAV trajectory to be circular for continuous wireless service supply. 
\vspace{-.2cm}
\subsection{Wireless Communication Model}

In considered UAV-enabled ISAC networks, we denote by $K$ the total number of ground communication users and by  $\mathbf w_k=(w_{x,k},w_{y,k})$ the ground deployed position for user $k\in\{1,...,K\}\triangleq \mathcal K$. Then, the communication distance between UAV and user $k$ in slot $n$ is given by\vspace{-.1cm}
\begin{equation}\vspace{-.1cm}
	d_n(\mathbf w_k)=\sqrt{(x_n-w_{x,k})^2+(y_n-w_{y,k})^2+H^2}\label{eq:dist_def}.
\end{equation}
Moreover, as pointed out in \cite{uav_1}, the air-to-ground links are highly likely to be dominated by line-of-sight (LoS) paths. Thus, for the wireless channels between UAV and ground users,  we adopt the free-space path loss model as widely applied in \cite{uav_3,uav_4,uav_5} and \cite{Jing_1,free_space1,free_space2}, 
without loss of generality. Given transmit power $P$ at the UAV, we can obtain the channel capacity from UAV to user $k$ in slot $n$ as\vspace{-.1cm}
\begin{equation}\vspace{-.1cm}
	R_{n,k}=\log_2(1+\frac{\beta P}{\sigma^2d_{n}(\mathbf w_k)^2}),
\end{equation}
where $\beta$ denotes the reference channel gain at unit distance and $\sigma^2$ is the noise power. 

Furthermore, we introduce binary variables $a_{n,k}\in\{0,1\}$ for the communication user assignment. Specifically, $a_{k,n}=1$ indicates that the slot $n$ is assigned to user $k$, and vice versa. Hereby, we have the following constraint\vspace{-.1cm}
\begin{equation}\vspace{-.1cm}
	\sum_{k=1}^Ka_{n,k}=1,~\forall n\in\mathcal N,
\end{equation}
indicating that only one user can be served by the UAV in each slot $n$. 
Afterwards, we can formulate the achievable throughput for each user $k$ during an operation period $T$ as
\begin{align}
	U_k(\mathbf x,\mathbf y,\mathbf a)&= \sum_{n=1}^Na_{n,k}\delta R_{k,n}=\sum_{n=1}^Na_{n,k}\delta\log_2(1+\frac{\beta P}{\sigma^2d_{n}(\mathbf w_k)^2}),
\end{align}
where $\mathbf x=[x_1,...,x_N]^\mathsf T\in\mathbb R^{N}$, $\mathbf y=[y_1,...,y_N]^\mathsf T\in\mathbb R^{N}$, and $\mathbf a=[a_{1,1},...,a_{N,K}]^\mathsf T\in\mathbb R^{NK}$.
\vspace{-.1cm}
\subsection{On-Demand Target Detection}

For UAV-enabled sensing services, 
we consider two service types, 
i.e., on-demand target detection and on-demand target localization. Owing to the dynamic nature of wireless networks,   
the on-demand sensing requests may emerge in any slot of the operation period $T$ and for any target point positioned in the serving region. We denote by $\mathcal D$ the targeted region for UAV-aided sensing services, as displayed in Fig.\ref{fig:scenario}, and by $\mathbf s=(s_x,s_y)\in\mathcal D$ an arbitrary target point in the sensing region. 
For target detection, once an on-demand task request occurs, the UAV will immediately inspect the echo of communication signals  from direction of the on-demand target point $\mathbf s\in\mathcal D$. During this process, the receiving  antennas at UAV will operate in a phased-array manner to capture the echo from dedicated direction, enhancing the detection precision. 

Apart from the position-specific feature in on-demand sensing tasks, we assume that the detection request may occur in any slot $n\in\mathcal N$ and will be served timely by the UAV in the same slot. Note that in practice, multiple on-demand tasks may arise within the same slot, which will necessitate a task scheduling process. 
In this paper, 
we assume the on-demand tasks are effectively scheduled by the control centre and informed to the UAV sequentially. In such manner, we are enabled to keep our main research focus on provisioning superior QoS guarantee for on-demand sensing. 

In slot $n$, if a target exists at position $\mathbf s\in\mathcal D$, the echo signal-to-noise ratio (SNR) at the UAV will be given by\vspace{-.1cm}
\begin{equation}\vspace{-.1cm}
	\gamma_{\mathbf s}(x_n,y_n)=\frac{\sigma_r \beta G_rP}{\sigma_0^2 d_n(\mathbf s)^4}, \label{eq:snr_detect}
\end{equation}
where $\sigma_r$ is the factor for radar cross section (RCS), $G_r$ is the receive antenna gain at UAV, and $\sigma_0^2$ denotes the noise power at UAV for sensing. As indicated in \eqref{eq:snr_detect}, the echo SNR is affected by two-way path loss, which results in an inversely proportional property with respect to $d_n(\mathbf s)^4$. Note that to extract echo from the received signal, the UAV will operate in a full-duplex mode and perform self-interference cancellation (SIC). In case of imperfect SIC, there will exist additional interference in the extracted echo signal. Hence, 
we assume the noise power $\sigma_0^2$ is modeled including both thermal noise and residual interference from imperfect SIC. 

To ensure superior on-demand QoS guarantee for target detection, we take into account both position-specific and time-specific features in on-demand detection tasks, which results in the following constraint\vspace{-.1cm}
\begin{equation}\vspace{-.1cm}
	\gamma_{\mathbf s}(x_n,y_n)\geq \xi_d,~\forall n\in\mathcal N,~\forall \mathbf s\in\mathcal D.\label{eq:detect_con}
\end{equation}
The threshold $\xi_d$ is derived out from a maximum allowed 
limit on the detection error probability involving both false alarm probability and miss detection probability. Notice that unlike works \cite{uav_2,uav_3,uav_4,uav_5} considering SNR-based sensing accuracy only for specific targets, the constraint \eqref{eq:detect_con} refers to detection QoS guarantee for the entire service region $\mathcal D$ and for all time slots, presenting additional challenges in the network designs. 
\vspace{-.2cm}
\subsection{On-Demand Target Localization}

Furthermore, in emergency scenarios and real-time monitoring applications, localization acts an important functionality enabling the  precise positioning of accidental events. Accordingly, we also include the capability of UAV for supporting on-demand target localization. Similarly, as on-demand tasks, the localization requests may occur in any slot for any target point $\mathbf s\in\mathcal D$. 
Hereby, different from target detection, the localization tasks 
requires at least three echoes from distinct distances~\cite{sensing_1} while applying the echo time-delay measurement for location estimation. In UAV-enabled ISAC network, we can directly leverage the high mobility of UAV and explore the echoes in different slots, as illustrated in Fig.~\ref{fig:scenario}. The diversely deployed UAV position in different slots will thus enable the localization of ground target points. In particular, we consider $L\geq 3$ consecutive time slots for each target localization request. 
Assuming a target localization request starts from slot $m$ and ends at slot $m+L-1$, for any possible target point $\mathbf s\in\mathcal D$, we have the distance vector denoted by $[d_m(\mathbf s),d_{m+1}(\mathbf s),...,d_{m+L-1}(\mathbf s)]$. As aforementioned, the UAV serves ISAC network periodically via repeating a circular trajectory. In case of $m>N-L+1$, the previously defined $(x_{N+n},y_{N+n})=(x_n,y_n)$, $\forall n\in\mathcal N$, will ensure the following modelling consistent for all $m\in\mathcal N$. 


In the localization process, each distance $d_n(\mathbf s)$ for $m\leq n\leq m+L-1$ can be estimated in slot $n$ via the two-way propagation delay $\tau_n(\mathbf s)$ of received echo at UAV, i.e., $d_n(\mathbf s)=\frac{\tau_n(\mathbf s)c}{2}$ where $c$ is the light speed. The measurement of distance $d_n(\mathbf s)$ can be expressed as\vspace{-.1cm}
\begin{equation}\vspace{-.1cm}
	\hat d_n(\mathbf s)=d_n(\mathbf s) + w_{n,\mathbf s},\label{eq:local_d}
\end{equation}
where $w_{n,\mathbf s}$ represents the Gaussian noise with zero mean and variance $\sigma_{n,\mathbf s}^2$. 
The variances $\sigma_{n,\mathbf s}^2$  reflects the accuracy of distance estimation as well as target localization.  According to \cite{sensing_2}, the variance  $\sigma_{n,\mathbf s}^2$ is inversely proportional to the echo SNR $\gamma_{\mathbf s}(x_n,y_n)$ which is given in \eqref{eq:snr_detect}. Denoting by $\alpha_t$ the scale coefficient, we have \vspace{-.1cm}
\begin{equation}\vspace{-.1cm}
	\sigma_{n,\mathbf s}^2=\frac{\alpha_t}{
		\gamma_{\mathbf s}(x_n,y_n)}.
\end{equation}

For target localization, we are expected to estimate $s_x$ and $s_y$ for any potential target point $\mathbf s\in\mathcal D$. Since MSE of estimators can hardly have an explicit expression, we apply CRB as the localization performance metric~\cite{Gu_1}, which behaves as a lower bound for the MSE of any unbiased estimator. More specifically, denoting by $\hat{\mathbf s}$ the unbiased estimated position for $\mathbf s$, we have the following inequality fulfilled\vspace{-.1cm}
\begin{equation}\vspace{-.1cm}
	\mathbb E_{\mathbf s}\{(\hat{\mathbf s}-\mathbf s)(\hat{\mathbf s}-\mathbf s)^\mathsf{T}\}\geq \mathbf J_{\mathbf s}^{-1}\triangleq \textbf{CRB}_\mathbf s,\label{eq:crb_theory}
\end{equation}
where $\mathbf J_\mathbf s$ is the Fisher information matrix (FIM) of $\mathbf s$. 

Based on \cite{Gu_1} and considering the on-demand localization request initialized in slot $m$, we have the corresponding FIM 
$\mathbf J_{\mathbf s}^{[m]}$ given by\vspace{-.1cm}
\begin{equation}\vspace{-.1cm}
	\mathbf J_{\mathbf s}^{[m]}=\mathbf Q_{\mathbf s}^{[m]} \mathbf J_{\mathbf d}^{[m]}{\mathbf Q_{\mathbf s}^{[m]}}^\mathsf T,\label{eq:crb_step0}
\end{equation}
where $\mathbf Q_{\mathbf s}^{[m]}$ denotes the Jacobian matrix of distance vector $\mathbf d_{\mathbf s}^{[m]}=[d_m(\mathbf s),...,d_{m+L-1}(\mathbf s)]$ with respect to $\mathbf s=(s_x,s_y)$ and $\mathbf J_{\mathbf d}^{[m]}$ is the FIM with respect to the distance vector $\mathbf d_{\mathbf s}^{[m]}$. In particular, we have the Jacobian matrix $\mathbf Q_{\mathbf s}^{[m]}$ given by
\begin{equation}
	\mathbf Q_{\mathbf s}^{[m]}=\frac{\partial {\mathbf d_{\mathbf s}^{[m]}}^\mathsf T}{\partial \mathbf s}=\left[\!
	\begin{array}{cccc}		\frac{x_m-s_x}{d_m(\mathbf s)}&\frac{x_{m+1}-s_x}{d_{m+1}(\mathbf s)}&\cdots&\frac{x_{m+L-1}-s_x}{d_{m+L-1}(\mathbf s)}\\
	\frac{y_m-s_y}{d_m(\mathbf s)}&\frac{y_{m+1}-s_y}{d_{m+1}(\mathbf s)}&\cdots&\frac{y_{m+L-1}-s_y}{d_{m+L-1}(\mathbf s)}
	\end{array}\!
	\right].\label{eq:crb_step1}
\end{equation}
Furthermore, from the distance estimations in \eqref{eq:local_d}, we have the estimated distance vector $\hat{\mathbf d}_{\mathbf s}^{[m]}=[\hat d_m(\mathbf s),...,\hat d_{m+L-1}(\mathbf s)]$ 
following Gaussian distribution with mean $\mathbf d_{\mathbf s}^{[m]}$ and covariance $\mathbf C_{\mathbf s}^{[m]}=\text{diag}(\sigma_{m,\mathbf s}^2,...,\sigma_{m+L-1,\mathbf s}^2)$. Following \cite{sensing_3}, each element $[\mathbf J_{\mathbf d}^{[m]}]_{l_1,l_2}$ in FIM $\mathbf J_{\mathbf d}^{[m]}$ can be obtained as
\begin{align}
	[\mathbf J_{\mathbf d}^{[m]}]_{l_1,l_2}=&\left[\frac{\partial \mathbf d_{\mathbf s}^{[m]}}{\partial d_{l_1}(\mathbf s)} \right]^{\mathsf T}(\mathbf C_{\mathbf s}^{[m]})^{-1}\left[\frac{\partial \mathbf d_{\mathbf s}^{[m]}}{\partial d_{l_2}(\mathbf s)} \right]+\frac{1}{2}\text{tr}\left[(\mathbf C_{\mathbf s}^{[m]})^{-1}\frac{\partial \mathbf C_{\mathbf s}^{[m]}}{\partial d_{l_1}(\mathbf s)}(\mathbf C_{\mathbf s}^{[m]})^{-1}\frac{\partial \mathbf C_{\mathbf s}^{[m]}}{\partial d_{l_2}(\mathbf s)}\right]\!. \label{eq:crb_step2}
\end{align}
By inserting \eqref{eq:crb_step1} and \eqref{eq:crb_step2} into \eqref{eq:crb_step0}, we can finally obtain the FIM $\mathbf J_{\mathbf s}^{[m]}$ with respect to $\mathbf s$, which can be expressed as follows
\begin{equation}
	\mathbf J_{\mathbf s}^{[m]}=\left[
	\begin{aligned}
		\Theta_{a,\mathbf s}^{[m]}&&\Theta_{c,\mathbf s}^{[m]}\\
		\Theta_{c,\mathbf s}^{[m]}&&\Theta_{b,\mathbf s}^{[m]}
	\end{aligned}
	\right],
\end{equation}
where with $\eta\triangleq\frac{\sigma_r\beta G_rP}{\alpha_t\sigma_0^2}$ for notation simplicity, $\Theta_{a,\mathbf s}^{[m]}$, $\Theta_{b,\mathbf s}^{[m]}$ and $\Theta_{c,\mathbf s}^{[m]}$ are respectively given by\vspace{-.1cm}
\begin{equation}
	\Theta_{a,\mathbf s}^{[m]}=\sum_{n=m}^{m+L-1}\left\{\frac{\eta(x_n-s_x)^2}{d_n(\mathbf s)^6}+\frac{8(x_n-s_x)^2}{d_n(\mathbf s)^4}\right\},\label{eq:crb_a}
\end{equation} 
\begin{equation}
	\Theta_{b,\mathbf s}^{[m]}=\sum_{n=m}^{m+L-1}\left\{\frac{\eta(y_n-s_y)^2}{d_n(\mathbf s)^6}+\frac{8(y_n-s_y)^2}{d_n(\mathbf s)^4}\right\},\label{eq:crb_b}
\end{equation}%
\begin{align}
	\Theta_{c,\mathbf s}^{[m]}=\sum_{n=m}^{m+L-1}&\left\{\frac{\eta(x_n-s_x)(y_n-s_y)}{d_n(\mathbf s)^6}\right.\left.+\frac{8(x_n-s_x)(y_n-s_y)}{d_n(\mathbf s)^4}\right\}.\label{eq:crb_c}
\end{align}
Subsequently, we derive out the CRB for the considered localization request at slot $m$ for target point $\mathbf s$, which is given as follows according to \eqref{eq:crb_theory}\vspace{-.1cm}
\begin{align}
	\textbf{CRB}_{\mathbf s}^{[m]}&=(\mathbf J_{\mathbf s}^{[m]})^{-1}
	=\frac{1}{\Theta_{a,\mathbf s}^{[m]}\Theta_{b,\mathbf s}^{[m]}-(\Theta_{c,\mathbf s}^{[m]})^2}\left[
	\begin{aligned}
		\Theta_{b,\mathbf s}^{[m]}&&\Theta_{c,\mathbf s}^{[m]}\\
		\Theta_{c,\mathbf s}^{[m]}&&\Theta_{a,\mathbf s}^{[m]}
	\end{aligned}
	\right].
\end{align}
As the diagonal elements in CRB matrix represents the CRB for each estimated parameter, we have the CRB for the estimation on $\mathbf s$ as follows \cite{sensing_3}\vspace{-.1cm}
\begin{align}
	\Phi_{\mathbf s}^{[m]}(\mathbf x,\mathbf y)&=\text{CRB}_{s_x}+\text{CRB}_{s_y}=\text{Tr}(\textbf{CRB}_\mathbf s)\nonumber \\
	&=\frac{\Theta_{a,\mathbf s}^{[m]}+\Theta_{b,\mathbf s}^{[m]}}{\Theta_{a,\mathbf s}^{[m]}\Theta_{b,\mathbf s}^{[m]}-(\Theta_{c,\mathbf s}^{[m]})^2}.\label{eq:crb}
\end{align}

Hereby, we define a maximum allowable CRB, $\xi_l$, to guarantee a superior QoS level in the on-demand target localization services provided by the UAV. 
In other words, we have the following constraint
\begin{equation}
	\Phi_{\mathbf s}^{[m]}(\mathbf x,\mathbf y)\leq \xi_l,~\forall m\in\mathcal N,~\forall \mathbf s\in\mathcal D,\label{eq:local_con}
\end{equation}
which ensures satisfactory QoS for on-demand localization requests initialized in any slot and for any target point in serving region $\mathcal D$. 
Note that the expression of CRB $\Phi_{\mathbf s}^{[m]}(\mathbf x,\mathbf y)$ given in \eqref{eq:crb} is highly complicated and  nonconvex, 
introducing significant difficulties in the corresponding system designs.
As aforedmentioend, the popular approximation approach applied in existing works \cite{Gu_1,Jing_1} for dealing with CRB in localization can hardly be adopted for scenarios towards localization QoS guarantee. The straightforward first-order Taylor approximation leads to considerable approximation errors. Moreover, 
the infinite number of constraints in \eqref{eq:local_con}, corresponding to the infinite number of target points in serving region $\mathcal D$, makes the on-demand QoS guarantee more challenging to achieve in the overall network designs.

\subsection{Problem Formulation}

In this work, we aim at maximally enhancing the communication performance via the joint communication user assignment and UAV trajectory design, while guaranteeing on-demand sensing QoS for both on-demand detection and on-demand localization tasks. To achieve a fairness among multiple communication users, we apply the minimum achievable throughput among multiple users as the objective to be maximized. The original optimization problem can be formulated as\vspace{-.1cm}
\begin{subequations}
	\begin{align}
		(\text{OP}): \max_{\mathbf x,\mathbf y,\mathbf a} &~~ \min_{k\in\mathcal K}\{U_k(\mathbf x,\mathbf y,\mathbf a)\}\label{eq:op_obj} \\
		\mathrm{s.t.}~&~~(x_{n+1}\!-\!x_{n})^2\!+\!(y_{n+1}\!-\!y_{n})^2\!\leq\! \delta^2V^2,\forall n,\\
		&~~\gamma_{\mathbf s}(x_n,y_n)\geq \xi_d,~\forall n\in\mathcal N,~\forall \mathbf s\in\mathcal D,\label{eq:op_con2}\\
		&~~\Phi_{\mathbf s}^{[m]}(\mathbf x,\mathbf y)\leq \xi_l,~\forall m\in\mathcal N,~\forall \mathbf s\in\mathcal D,\label{eq:op_con3}\\
		&~~ \sum_{k=1}^Ka_{k,n}=1,~\forall n\in\mathcal N,\\
		&~~ a_{k,n}\in\{0,1\},~\forall k\in\mathcal K,~\forall n\in\mathcal N.
	\end{align}
\end{subequations}
Clearly, the problem (OP) is 
not convex, due to the nonconcavity of objective \eqref{eq:op_obj} and the nonconvexity of constraints \eqref{eq:op_con2} and \eqref{eq:op_con3}. In the meantime, the binary user assignment variable $\mathbf a$, the aforementioned infinite number of constraints in \eqref{eq:op_con2} and \eqref{eq:op_con3} for guaranteeing on-demand sensing QOS, as well as the highly sophisticated CRB expression in \eqref{eq:op_con3}, have made the problem (OP) particularly strenuous to be tackled.


\section{Characterization of On-Demand Sensing Requirements}

To address the above design difficulties, we start with characterizing the impacts of infinite on-demand sensing constraints \eqref{eq:op_con2} and \eqref{eq:op_con3}. Based on the characterization, we will identify a feasible region for UAV deployment and finite number of reference localization constraints, which will largely facilitate the problem analysis.
\vspace{-.2cm}
\subsection{UAV Deployment Restriction for On-Demand Detection} 
Referring to the on-demand detection QoS guarantee, the constraint \eqref{eq:op_con2} poses a minimum SNR limit to any wireless connection from a deployed UAV position $(x_n,y_n)$ to a target point $\mathbf s$ in ground region $\mathcal D$. In other words, for each target point $\mathbf s\in\mathcal D$, the echo SNR in all UAV operation slots should be above the minimum SNR limit, which restricts the whole trajectory of UAV in a certain region. More specifically, for given target point $\mathbf s\in\mathcal D$, the constraint \eqref{eq:op_con2} is equivalent to a limitation on the distance from the UAV to $\mathbf s$, i.e.,\vspace{-.1cm}
\begin{equation}\vspace{-.1cm}
	d_n(\mathbf s)\leq \sqrt[4]{\frac{\sigma_r\beta G_rP}{\sigma_0^2\xi_d}},~\forall n\in\mathcal N.
\end{equation}
Further considering the fixed deployment altitude $H$ for the UAV, we have an equivalent limitation on the horizontal distance $\bar d_n(\mathbf s)$ between UAV and point $\mathbf s$\vspace{-.1cm}
\begin{align}
	\bar d_n(\mathbf s)&=\sqrt{(x_n-s_x)^2+(y_n-s_y)^2}=\sqrt{d_n(\mathbf s)^2-H^2}\nonumber \\
	&\leq \Big(\sqrt{\frac{\sigma_r\beta G_rP}{\sigma_0^2\xi_d}}-H^2\Big)^{1/2}{\triangleq \bar d_0},~\forall n\in\mathcal N,
\end{align}
restricting the whole UAV trajectory in a circular area with centre $(s_x,s_y)$ and radius $\bar d_0$. 

Afterwards, for guaranteeing detection QoS for the whole serving region $\mathcal D$, we can obtain the following Lemma~\ref{le:detect_region}, clearly stating the feasible UAV deployment region corresponding to the constraint \eqref{eq:op_con2}.\vspace{-.1cm}
\begin{lemma}\label{le:detect_region}
	In problem (OP), the constraint \eqref{eq:op_con2} for  on-demand detection QoS guarantee, is equivalent to a constraint restricting the whole UAV trajectory in a convex region $\Omega$,
	which is given by\vspace{-.1cm}
	\begin{equation}\vspace{-.1cm}
		\Omega=\bigcap_{\mathbf s\in\mathcal D}\{(x,y)|\sqrt{(x-s_x)^2+(y-s_y)^2}\leq r_0\}.\label{eq:le_region}
	\end{equation}
\end{lemma}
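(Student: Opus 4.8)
The plan is to prove the two assertions of the lemma in turn: first the \emph{equivalence} between the detection constraint \eqref{eq:op_con2} and the geometric containment of the whole trajectory in $\Omega$, and then the \emph{convexity} of $\Omega$. The radius $r_0$ in \eqref{eq:le_region} is understood to be the quantity $\bar d_0$ already derived in the text preceding the lemma.

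First I would fix a single target point $\mathbf s\in\mathcal D$ and a single slot $n\in\mathcal N$ and translate the per-pair SNR constraint into a per-pair distance constraint. Since $\gamma_{\mathbf s}(x_n,y_n)$ in \eqref{eq:snr_detect} is strictly decreasing in $d_n(\mathbf s)$, the inequality $\gamma_{\mathbf s}(x_n,y_n)\ge \xi_d$ is equivalent to $d_n(\mathbf s)^4\le \frac{\sigma_r\beta G_rP}{\sigma_0^2\xi_d}$, and after substituting $d_n(\mathbf s)^2=\bar d_n(\mathbf s)^2+H^2$ from the distance model \eqref{eq:dist_def}, to $\bar d_n(\mathbf s)\le r_0$. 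This is precisely the statement that $(x_n,y_n)$ lies in the closed disk of radius $r_0$ centred at $\mathbf s$, reproducing the per-point derivation already carried out in the excerpt.

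Next I would handle the quantifiers, which is where the reformulation acquires its content. Constraint \eqref{eq:op_con2} asks the per-pair inequality to hold for \emph{every} $\mathbf s\in\mathcal D$ and \emph{every} $n\in\mathcal N$. For a fixed slot $n$, requiring $(x_n,y_n)$ to lie in the disk centred at $\mathbf s$ simultaneously for all $\mathbf s\in\mathcal D$ is, by definition, the membership $(x_n,y_n)\in\Omega$ with $\Omega$ as in \eqref{eq:le_region}. Imposing this for all $n\in\mathcal N$ is exactly the requirement that the whole trajectory lie in $\Omega$; since every step in this chain is an equivalence, both directions of the claimed equivalence follow at once.

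Finally I would establish convexity. Each set $\{(x,y):\sqrt{(x-s_x)^2+(y-s_y)^2}\le r_0\}$ is a closed Euclidean ball and hence convex, and an arbitrary (here uncountable) intersection of convex sets is convex, so $\Omega$ is convex. I do not expect a genuine obstacle: the only points deserving care are well-definedness and nonemptiness, namely that $r_0$ is real precisely when $\sqrt{\sigma_r\beta G_rP/(\sigma_0^2\xi_d)}\ge H^2$ and that the disks admit a common point, both of which are modelling conditions implied by positing a feasible detection QoS target rather than gaps in the argument. The substance of the proof lies entirely in the monotonicity step and in recognising the universal quantifier over $\mathbf s$ as a set intersection; convexity is then immediate from the closed-ball structure.
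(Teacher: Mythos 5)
Your proposal is correct and follows essentially the same route as the paper's proof: translating the SNR bound into a per-point disk constraint (with $r_0$ identified as the $\bar d_0$ derived beforehand), recognizing the universal quantifier over $\mathbf s\in\mathcal D$ as the intersection \eqref{eq:le_region}, and invoking that an intersection of convex disks is convex. Your version merely spells out the monotonicity and quantifier steps that the paper leaves implicit.
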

\begin{proof}
	As aforementioned, guaranteeing detection QoS for each target point $\mathbf s$ will constrict the UAV trajectory in a circular area. For QoS guarantee with respect to the whole serving region, the UAV trajectory is thus restricted in the intersection of these circular areas, as shown in \eqref{eq:le_region}. Since all circular areas are convex, the intersection region $\Omega$ can be easily proved to be convex. 
\end{proof}
According to Lemma~\ref{le:detect_region}, the constraint \eqref{eq:op_con2} can be equivalently reformed as\vspace{-.1cm}
\begin{equation}\vspace{-.1cm}
	(x_n,y_n)\in\Omega,
	~\forall n\in\mathcal N.\label{eq:op_con2_new}
\end{equation}
The convexity of region $\Omega$
shown in Lemma~\ref{le:detect_region} ensures the convexity of the reformed constraint~\eqref{eq:op_con2_new}.

\subsection{Reference Target Points for On-Demand Localization}

For on-demand localization services, to deal with the infinite number of potential sensing points in region $\mathcal D$, we propose an approach for defining finite reference target points. 
As a consequence, satisfying the on-demand localization constraints at these reference target points will guarantee the constraint fulfillment over the whole serving region $\mathcal D$, with sufficiently high probability. 

Specifically, we start with a set $\Gamma_l$ of randomly generated sensing target points  within $\mathcal D$, a set $\Gamma_t$ of randomly generated UAV trajectory satisfying all constraints in (OP) except \eqref{eq:op_con3}, and a set $\mathcal F=\emptyset$ for saving the reference target points. For the trajectories in set $\Gamma_t$, we first delete these which cannot satisfy \eqref{eq:op_con3} for all $\mathbf s\in\mathcal F$.  For each $\mathbf s_i\in\Gamma_l$, if there exists any trajectory in reduced set $\Gamma_t$ not satisfying \eqref{eq:op_con3}, then this target point $\mathbf s_i$ should be identified as a new reference point, as there exists a feasible trajectory passing sensing tests \eqref{eq:op_con3} at all sensing points in $\mathcal F$ but still not fulfilling the constraint \eqref{eq:op_con3} at $\mathbf s_i$. We then add $\mathbf s_i$ into set $\mathcal F$, to enrich the set of reference target points. 

After each round update of $\mathcal F$, we regenerate new sets $\Gamma_l$ and $\Gamma_t$, for the next round $\mathcal F$ updates. Finally, if the size of set $\mathcal F$ is not enlarged after many-round updates, we are then convinced that the set $\mathcal F$ has contained sufficient reference target points, such that fulfilling the sensing constraint \eqref{eq:op_con3} at these points will sufficiently represent a fulfillment of \eqref{eq:op_con3} for all sensing targets in serving region $\mathcal D$. 

\subsection{Problem Reformulation}
Applying the above characterizations on constraints \eqref{eq:op_con2} and \eqref{eq:op_con3}, as well as a relaxation on the binary assignment variable $\mathbf a$, we obtain a reformulated problem of (OP) as\vspace{-.1cm}
\begin{subequations}
	\begin{align}
		(\text{P1}): \max_{\mathbf x,\mathbf y,\mathbf a} &~~ \min_{k\in\mathcal K}\{U_k(\mathbf x,\mathbf y,\mathbf a)\}\label{eq:p1_obj} \\
		\mathrm{s.t.}~&~~(x_{n+1}\!-\!x_{n})^2\!+\!(y_{n+1}\!-\!y_{n})^2\!\leq\! \delta^2V^2,\forall n,\label{eq:p1_con1}\\
		&~~(x_n,y_n)\in\Omega,~\forall n\in\mathcal N,\label{eq:p1_con2}\\
		&~~\Phi_{\mathbf s}^{[m]}(\mathbf x,\mathbf y)\leq \xi_l,~\forall m\in\mathcal N,~\forall \mathbf s\in\mathcal F,\label{eq:p1_con3}\\
		&~~ \sum_{k=1}^Ka_{k,n}=1,~\forall n\in\mathcal N,\\
		&~~ 0\leq a_{k,n}\leq 1,~\forall k\in\mathcal K,~\forall n\in\mathcal N.
	\end{align}
\end{subequations}
With relaxation on  binary variable $\mathbf a$, the resulting solution from solving (P1) may contain non-integer assignment decision. However, according to \cite{our_1}, we have the conclusion that an assignment solution along UAV trajectory, where multiple users are scheduled to share trajectory segment with 
different occupation ratios, can be practically realized without performance degradation via an overall user rescheduling. 
This implies that compared to (OP), the reformulated problem (P1) with relaxed  assignment variables is without loss of design optimality. In the following, we will concentrate on the still nonconvex problem (P1) and target at proposing an efficient solution. 
\vspace{-.2cm}
\section{Iterative Solution}
\vspace{-.1cm}
Via observations, we can find that the problem (P1) can be approximated to a convex one, if we can build up a concave approximation for the objective \eqref{eq:p1_obj} and a convex approximation for the constraint \eqref{eq:p1_con3}. In this section, 
we make efforts to establish a convex problem approximation for (P1) with zero approximation gap at given feasible point $(\mathbf x^{(r)},\mathbf y^{(r)},\mathbf a^{(r)})$. The approximation approach assures that solving the approximated convex problem will lead to an improved feasible solution, thus enabling an iterative algorithm for (P1) via successive convex programming (SCP) techniques.\vspace{-.3cm}
\subsection{Objective Approximation}

For the nonconcave objective to be maximized, we aim to construct a concave approximation as the lower bound function, such that maximizing the approximated function will guarantee improvement of the objective function. As aforementioned, the approximation will be tight at given local point $(\mathbf x^{(r)},\mathbf y^{(r)},\mathbf a^{(r)})$, to enable iterative solution improvements. 

Hereby, we start with the approximation construction for the accumulated throughput at user $k$ during slot $n$, namely, $a_{n,k}\delta\log_2(1+\frac{\beta P}{\sigma^2d_{n}(\mathbf w_k)^2})$. Base on the easily proved convexity of function $\log_2(1+\frac{1}{t})$ with respect to $t>0$, we have the following inequality\vspace{-.2cm}
\begin{align}
	&a_{n,k}\delta \log_2(1+\frac{\beta P}{\sigma^2d_{n}(\mathbf w_k)^2})
	\geq-A_{1,k,n}^{(r)}a_{n,k}d_n(\mathbf w_k)^2+A_{2,n,k}^{(r)}a_{n,k},\label{eq:obj_app1}
\end{align}
where the constants $A_{1,k,n}^{(r)}$ and $A_{2,k,n}^{(r)}$ are defined at the local point $(\mathbf x^{(r)},\mathbf y^{(r)},\mathbf a^{(r)})$ as\vspace{-.1cm}
\begin{align}
	A_{1,k,n}^{(r)}&=\frac{\delta\frac{\beta P}{\sigma^2}\ln2}{d_n^{(r)}(\mathbf w_k)^2(d_n^{(r)}(\mathbf w_k)^2+\frac{\beta P}{\sigma^2})}>0,\\
	A_{2,k,n}^{(r)}&=\delta \log_2(1+\frac{\beta P}{\sigma^2d_{n}^{(r)}(\mathbf w_k)^2})+A_{1,k,n}^{(r)}d_{n}^{(r)}(\mathbf w_k)^2,
\end{align}
ensuring the equality in \eqref{eq:obj_app1} holds at $(\mathbf x^{(r)},\mathbf y^{(r)},\mathbf a^{(r)})$. Note that $d_{n}^{(r)}(\mathbf w_k)$ is constant with given $(\mathbf x^{(r)},\mathbf y^{(r)},\mathbf a^{(r)})$ and denotes the value of $d_{n}(\mathbf w_k)$ at the local point. 

Since the approximated result in \eqref{eq:obj_app1} is still not concave, we are motivated to apply a further step approximation based on the mean inequality. As known, we have $\forall a,b>0$, $ab\leq \frac{1}{2}a^2+\frac{1}{2}b^2$ and the equality holds when $a=b$. Thus, we can introduce a positive constant 
	$A_{3,k,n}^{(r)}=\frac{a_{k,n}^{(r)}}{d_n^{(r)}(\mathbf w_k)^2}$
and treat $a_{n,k}$ and $A_{3,k,n}^{(r)}d_n(\mathbf w_k)^2$ in \eqref{eq:obj_app1} respectively as $a$ and $b$ in the mean inequality. Accordingly, we have\vspace{-.1cm}
\begin{align}
	a_{n,k}\delta \log_2(1+\frac{\beta P}{\sigma^2d_{n}(\mathbf w_k)^2})&
	\geq-\frac{A_{1,k,n}^{(r)}}{2A_{3,k,n}^{(r)}}a_{n,k}^2-\frac{A_{1,k,n}^{(r)}A_{3,k,n}^{(r)}}{2}d_n(\mathbf w_k)^4+A_{2,n,k}^{(r)}a_{n,k}\nonumber \\
	&\triangleq f_{n,k}^{(r)}(\mathbf x,\mathbf y,\mathbf a),
\end{align}
where aided by $A_{3,k,n}^{(r)}$, the equality is guaranteed to hold at the local point  $(\mathbf x^{(r)},\mathbf y^{(r)},\mathbf a^{(r)})$. The obtained function $f_{n,k}^{(r)}(\mathbf x,\mathbf y,\mathbf a)$ can be easily proved to be concave. 

After all, we can accomplish the approximation for the whole objective function as\vspace{-.2cm}
\begin{equation}\vspace{-.15cm}
	\min_{k\in\mathcal K}\{U_k(\mathbf x,\mathbf y,\mathbf a)\}\geq \min_{k\in\mathcal K}\{\sum_{n=1}^Nf_{n,k}^{(r)}(\mathbf x,\mathbf y,\mathbf a)\},
\end{equation}
which is clearly concave and shows a tight approximation at the local point $(\mathbf x^{(r)},\mathbf y^{(r)},\mathbf a^{(r)})$.
\vspace{-.2cm}
\subsection{Approximation of On-Demand Localization Constraint}
Next, we move forward to the approximation of the nonconvex on-demand localization constraint \eqref{eq:p1_con3}. To facilitate the approximation process, we reform the constraint \eqref{eq:p1_con3} equivalently as follows $\forall m\in\mathcal N$ and $\forall \mathbf s\in\mathcal F$\vspace{-.1cm}
\begin{equation}\vspace{-.1cm}
	\Theta_{a,\mathbf s}^{[m]}+\Theta_{b,\mathbf s}^{[m]}- \xi_l(\Theta_{a,\mathbf s}^{[m]}\Theta_{b,\mathbf s}^{[m]}-(\Theta_{c,\mathbf s}^{[m]})^2)\leq 0.\label{eq:crb_app0}
\end{equation}
Recalling the detailed expressions in \eqref{eq:crb_a}-\eqref{eq:crb_c}, there exists dramatic difficulties in dealing with  the constraint \eqref{eq:crb_app0}, 
making it particularly challenging to construct a convex approximation for the constraint function in \eqref{eq:crb_app0} while ensuring both the upper bound property and approximation tightness at given local point. 
To address these challenges, we will focus on the convex upper bound approximations for $\Theta_{a,\mathbf s}^{[m]}+\Theta_{b,\mathbf s}^{[m]}$, and the concave lower bound approximation for $\Theta_{a,\mathbf s}^{[m]}\Theta_{b,\mathbf s}^{[m]}-(\Theta_{c,\mathbf s}^{[m]})^2$, respectively. 
\subsubsection{Convex Approximations for $\Theta_{a,\mathbf s}^{[m]}+\Theta_{b,\mathbf s}^{[m]}$}
According to the definition of $d_n(\mathbf s)$ in~\eqref{eq:dist_def},
the summed term $\Theta_{a,\mathbf s}^{[m]}+\Theta_{b,\mathbf s}^{[m]}$ can be reformulated as\vspace{-.1cm} 
\begin{align}
    \Theta_{a,\mathbf s}^{[m]}\!+\!\Theta_{b,\mathbf s}^{[m]}\!=\!\!\!\sum_{n=m}^{m+L-1}\!\!\left\{\!\frac{\eta}{d_n(\mathbf s)^4}-\frac{\eta H^2}{d_n(\mathbf s)^6}+\frac{8}{d_n(\mathbf s)^2}-\frac{8H^2}{d_n(\mathbf s)^4}\!\right\}\!\!.
\end{align}
In the above expression, it can be easily proved that the term $-\frac{\eta H^2}{d_n(\mathbf s)^6}-\frac{8H^2}{d_n(\mathbf s)^4}$ is concave in $d_n(\mathbf s)^2$. Leveraging the first-order condition of concave functions, we have\vspace{-.1cm}
\begin{align}
    &\Theta_{a,\mathbf s}^{[m]}+\Theta_{b,\mathbf s}^{[m]}
    \leq\!\sum_{n=m}^{m+L-1}\!\left\{\!\frac{\eta}{d_n(\mathbf s)^4}+\frac{8}{d_n(\mathbf s)^2}+B_{1,n,\mathbf s}^{(r)}{d_n(\mathbf s)^2}\!+B_{2,n,\mathbf s}^{(r)}\!\right\}\!,\label{eq:ab_app1}
\end{align}
where constants $B_{1,n,\mathbf s}^{(r)}$ and $B_{2,n,\mathbf s}^{(r)}$ are respectively given by\vspace{-.1cm}
\begin{align}
    B_{1,n,\mathbf s}^{(r)}&=\frac{3\eta H^2}{(d_n^{(r)}(\mathbf s))^8}+\frac{16H^2}{(d_n^{(r)}(\mathbf s))^6}>0,\\
    B_{2,n,\mathbf s}^{(r)}&=-\frac{\eta H^2}{(d_n^{(r)}(\mathbf s))^6}-\frac{8H^2}{(d_n^{(r)}(\mathbf s))^4}-B_{1,n,\mathbf s}^{(r)}(d_n^{(r)}(\mathbf s))^2,
\end{align}
with $d_n^{(r)}(\mathbf s)$ representing the value of $d_n(\mathbf s)$ at local point $(\mathbf x^{(r)},\mathbf y^{(r)},\mathbf a^{(r)})$. The equality in approximation \eqref{eq:ab_app1} can be verified holding at the given local point $(\mathbf x^{(r)},\mathbf y^{(r)},\mathbf a^{(r)})$. 
Since $B_{1,n,\mathbf s}^{(r)}>0$ in \eqref{eq:ab_app1}, we can accomplish the convex approximation for $\Theta_{a,\mathbf s}^{[m]}+\Theta_{b,\mathbf s}^{[m]}$ via constructing a tight convex upper bound for the nonconvex term $\frac{\eta}{d_n(\mathbf s)^4}+\frac{8}{d_n(\mathbf s)^2}$. Regarding the convex term $d_n(\mathbf s)^2$, we have \vspace{-.1cm}
\begin{align}
    d_n(\mathbf s)^2\geq& 2(x_n^{(r)}-s_x)(x_n-s_x)+2(y_n^{(r)}-s_y)(y_n-s_y)+2H^2-(d_n^{(r)}(\mathbf s))^2\nonumber \\
    \triangleq&\varphi_{n,\mathbf s}^{(r)}(x_n,y_n),
\end{align}
where the equality holds at point $(\mathbf x^{(r)},\mathbf y^{(r)},\mathbf a^{(r)})$. Further applying the monotonic decreasing property of the term $\frac{\eta}{d_n(\mathbf s)^4}+\frac{8}{d_n(\mathbf s)^2}$ with respect to $d_n(\mathbf s)^2$, we has the convex approximation for $\Theta_{a,\mathbf s}^{[m]}+\Theta_{b,\mathbf s}^{[m]}$ given as
\begin{align}\vspace{-.1cm}
    &\Theta_{a,\mathbf s}^{[m]}+\Theta_{b,\mathbf s}^{[m]}\nonumber \\
    \leq&\!\sum_{n=m}^{m+L-1}\!\Big\{\!\frac{\eta}{(\varphi_{n,\mathbf s}^{(r)}(x_n,y_n))^2}+\frac{8}{\varphi_{n,\mathbf s}^{(r)}(x_n,y_n)}+B_{1,n,\mathbf s}^{(r)}{d_n(\mathbf s)^2}+B_{2,n,\mathbf s}^{(r)}\!\Big\}\!\nonumber \\
    \triangleq &g_{1,m,\mathbf s}^{(r)}(\mathbf x,\mathbf y),\label{eq:ab_app2}
\end{align}
where the equality is also guaranteed to hold at the local point $(\mathbf x^{(r)},\mathbf y^{(r)},\mathbf a^{(r)})$. Hereby, it can be proved that the obtained function $g_{1,m,\mathbf s}^{(r)}(\mathbf x,\mathbf y)$ is jointly convex in $(\mathbf x,\mathbf y)$.

\subsubsection{Concave Approximation for $\Theta_{a,\mathbf s}^{[m]}\Theta_{b,\mathbf s}^{[m]}-(\Theta_{c,\mathbf s}^{[m]})^2$}
Next, we turn to construct a concave lower bound function for $\Theta_{a,\mathbf s}^{[m]}\Theta_{b,\mathbf s}^{[m]}-(\Theta_{c,\mathbf s}^{[m]})^2$. According to \cite{ine_0}, we can rewrite the term  $\Theta_{a,\mathbf s}^{[m]}\Theta_{b,\mathbf s}^{[m]}-(\Theta_{c,\mathbf s}^{[m]})^2$ as\vspace{-.1cm}
\begin{align}
	&\Theta_{a,\mathbf s}^{[m]}\Theta_{b,\mathbf s}^{[m]}-(\Theta_{c,\mathbf s}^{[m]})^2\nonumber \\
	=&\sum_{n_1=m}^{m+L-2}\sum_{n_2=n_1+1}^{m+L-1}\Big\{\big(\frac{\eta^2}{d_{n_1}(\mathbf s)^6d_{n_2}(\mathbf s)^6}+\frac{8\eta}{d_{n_1}(\mathbf s)^6d_{n_2}(\mathbf s)^4}+\frac{8\eta}{d_{n_1}(\mathbf s)^4d_{n_2}(\mathbf s)^6}+\frac{64}{d_{n_1}(\mathbf s)^4d_{n_2}(\mathbf s)^4}\big)\nonumber \\
	&\boldsymbol{\cdot}\underbrace{\big[\!(x_{n_1}-s_x)(y_{n_2}-s_y)-(x_{n_2}-s_x)(y_{n_1}-s_y)\!\big]^2\!}_{\triangleq \psi_{\mathbf s,n_1,n_2}(\mathbf x,\mathbf y)}\Big\},\!\label{eq:crb_c_new}
\end{align}
which will significantly facilitate the construction of targeted concave approximation. Specifically, we have the function $h_2(d_1^2,d_2^2)=\frac{\eta^2}{d_1^6d_2^6}+\frac{8\eta}{d_1^6d_2^4}+\frac{8\eta}{d_1^4d_2^6}+\frac{64}{d_1^4d_2^4}$ being jointly convex in $(d_1^2,d_2^2)$ and monotonically decreasing in both $d_1^2$ and $d_2^2$. Applying the property of convex function $h_2(d_1^2,d_2^2)$ in \eqref{eq:crb_c_new}, we can obtain\vspace{-.1cm}
\begin{align}
	&\big(\frac{\eta^2}{d_{n_1}^6d_{n_2}^6}+\frac{8\eta}{d_{n_1}^6d_{n_2}^4}+\frac{8\eta}{d_{n_1}^4d_{n_2}^6}+\frac{64}{d_{n_1}^4d_{n_2}^4}\big)\cdot \psi_{\mathbf s,n_1,n_2}(\mathbf x,\mathbf y)\nonumber \\
	\geq &-\psi_{\mathbf s,n_1,n_2}(\mathbf x,\mathbf y)\big({E}_{1,n_1,n_2}^{(r)}d_{n_1}^2+{E}_{2,n_1,n_2}^{(r)}d_{n_2}^2\big)+{E}_{3,n_1,n_2}^{(r)}\psi_{\mathbf s,n_1,n_2}(\mathbf x,\mathbf y),\label{eq:crb_c_app1}
\end{align}
where we define $d_{n_1}=d_{n_1}(\mathbf s)$ and $d_{n_2}=d_{n_2}(\mathbf s)$ for denotation simplicity. The constants ${E}_{1,n_1,n_2}^{(r)}$, ${E}_{2,n_1,n_2}^{(r)}$ and ${E}_{3,n_1,n_2}^{(r)}$ are all non-negative and associated to the local point $(\mathbf x^{(r)},\mathbf y^{(r)},\mathbf a^{(r)})$. In particular, these non-negative constants are defined as follows, according to the first-order condition of convex functions\vspace{-.1cm}
\begin{align}
    {E}_{1,n_1,n_2}^{(r)}&\!=\!\frac{3\eta^2}{{d_{n_1}^{(r)}}^8\!{d_{n_2}^{(r)}}^6}\!+\!\frac{24\eta}{{d_{n_1}^{(r)}}^8\!{d_{n_2}^{(r)}}^4}\!+\!\frac{16\eta}{{d_{n_1}^{(r)}}^6\!{d_{n_2}^{(r)}}^6}\!+\!\frac{128}{{d_{n_1}^{(r)}}^6\!{d_{n_2}^{(r)}}^4},\\
    {E}_{2,n_1,n_2}^{(r)}&\!=\!\frac{3\eta^2}{{d_{n_1}^{(r)}}^6\!{d_{n_2}^{(r)}}^8}\!+\!\frac{16\eta}{{d_{n_1}^{(r)}}^6\!{d_{n_2}^{(r)}}^6}\!+\!\frac{24\eta}{{d_{n_1}^{(r)}}^4\!{d_{n_2}^{(r)}}^8}\!+\!\frac{128}{{d_{n_1}^{(r)}}^4\!{d_{n_2}^{(r)}}^6},\\
    {E}_{3,n_1,n_2}^{(r)}&\!=\!h_2({d_{n_1}^{(r)}}^2,{d_{n_2}^{(r)}}^2)+{E}_{1,n_1,n_2}^{(r)}{d_{n_1}^{(r)}}^2+{E}_{1,n_1,n_2}^{(r)}{d_{n_1}^{(r)}}^2,
\end{align}

\vspace{-.2cm}
\noindent{}where ${d_{n_1}^{(r)}}$ and ${d_{n_2}^{(r)}}$ are respectively the value of ${d_{n_1}}$ and ${d_{n_2}}$ at the local point $(\mathbf x^{(r)},\mathbf y^{(r)},\mathbf a^{(r)})$. Clearly, it can be easily verified that the equality in \eqref{eq:crb_c_app1} holds at the local point $(\mathbf x^{(r)},\mathbf y^{(r)},\mathbf a^{(r)})$. 

Next, on account of the complicated expression of function $\psi_{\mathbf s,n_1,n_2}(\mathbf x,\mathbf y)$ in \eqref{eq:crb_c_app1}, we move forward to decouple  $\psi_{\mathbf s,n_1,n_2}(\mathbf x,\mathbf y)$ from other terms in the next-step approximation. Recall a variant of the mean inequality, i.e., $ab=\frac{1}{E}\cdot a\cdot Eb\leq \frac{1}{2E}a^2+\frac{E}{2}b^2$, $\forall a,b,E\geq 0$, where equality holds when $a=Eb$. By letting $a=\psi_{\mathbf s,n_1,n_2}(\mathbf x,\mathbf y)\geq0$ and $b={E}_{1,n_1,n_2}^{(r)}d_{n_1}^2+{E}_{2,n_1,n_2}^{(r)}d_{n_2}^2\geq 0$, the result in \eqref{eq:crb_c_app1} can be further approximated as\vspace{-.1cm}
\begin{align}
	&\big(\frac{\eta^2}{d_{n_1}^6d_{n_2}^6}+\frac{8\eta}{d_{n_1}^6d_{n_2}^4}+\frac{8\eta}{d_{n_1}^4d_{n_2}^6}+\frac{64}{d_{n_1}^4d_{n_2}^4}\big)\cdot \psi_{\mathbf s,n_1,n_2}(\mathbf x,\mathbf y)\nonumber \\
	\geq &-\frac{{E}_{4,n_1,n_2}^{(r)}}{2}\big({E}_{1,n_1,n_2}^{(r)}d_{n_1}^2+{E}_{2,n_1,n_2}^{(r)}d_{n_2}^2\big)^2-\frac{1}{2{E}_{4,n_1,n_2}^{(r)}\!\!}\psi_{\mathbf s,n_1,n_2}(\mathbf x,\mathbf y)^2\!+\!E_{3,n_1,n_2}^{(r)}\psi_{\mathbf s,n_1,n_2}(\mathbf x,\mathbf y),\label{eq:crb_c_app2}
\end{align}

\vspace{-.2cm}
\noindent{}where the constant ${E}_{4,n_1,n_2}^{(r)}$ are defined as\vspace{-.1cm}
\begin{equation}\vspace{-.1cm}
    {E}_{4,n_1,n_2}^{(r)}=\frac{\psi_{\mathbf s,n_1,n_2}(\mathbf x^{(r)},\mathbf y^{(r)})}{{E}_{1,n_1,n_2}^{(r)}{d_{n_1}^{(r)}}^2+{E}_{2,n_1,n_2}^{(r)}{d_{n_2}^{(r)}}^2},
\end{equation}
ensuring the equality in  approximation \eqref{eq:crb_c_app2} holds at local point  $(\mathbf x^{(r)},\mathbf y^{(r)},\mathbf a^{(r)})$. So far, in  \eqref{eq:crb_c_app2}, only the terms related to complicated function $\psi_{\mathbf s,n_1,n_2}(\mathbf x,\mathbf y)$ are not concave yet. To achieve the targeted concave approximation, we propose the following Lemma~\ref{le:crb_app2}, to aid the last-step approximation. 
\vspace{-.1cm}
\begin{lemma}\label{le:crb_app2}
    Consider function $h_3(\mathbf x_0,\mathbf y_0)=(x_1y_2-x_2y_1)^2$ with $\mathbf x_0=[x_1,x_2]^\mathsf T$ and $\mathbf y_0=[y_1,y_2]^\mathsf T$. Then, given local point $(\mathbf x_0^{(r)},\mathbf y_0^{(r)})=(x_1^{(r)},x_2^{(r)},y_1^{(r)},y_2^{(r)})$, we have the following  inequalities holds\vspace{-.1cm}
	\begin{align}
        &\Big(4h_4(\mathbf x_0^{(r)},\mathbf y_0^{(r)})\tilde{h}_4^{(r)}(\mathbf x_0,\mathbf y_0)+4h_5(\mathbf x_0^{(r)},\mathbf y_0^{(r)})\tilde{h}_5^{(r)}(\mathbf x_0,\mathbf y_0)-(h_4(\mathbf x_0,\mathbf y_0)+h_5(\mathbf x_0,\mathbf y_0))^2\Big)\nonumber \\
        \leq& 
        h_3(\mathbf x_0,\mathbf y_0)
        \nonumber \\
        \leq & 
        \Big(\!\!\max\!\big\{h_4(\mathbf x_0,\mathbf y_0)\!-\!\tilde{h}_5^{(r)}\!(\mathbf x_0,\mathbf y_0),
        h_5(\mathbf x_0,\mathbf y_0)\!-\!\tilde{h}_4^{(r)}\!(\mathbf x_0,\mathbf y_0)\big\}\!\Big)^{\!2}\!\!,\label{eq:le_app3}
	\end{align}
	
    \vspace{-.2cm}
\noindent{}where $h_4(\mathbf x_0,\mathbf y_0)$ and $h_5(\mathbf x_0,\mathbf y_0)$ are positive convex functions defined as\vspace{-.1cm}
    \begin{align}
        h_4(\mathbf x_0,\mathbf y_0)&=\frac{1}{2}\big((x_1+y_2)^2+x_2^2+y_1^2\big),
        \end{align}
        \begin{align}
        h_5(\mathbf x_0,\mathbf y_0)&=\frac{1}{2}\big(x_1^2+y_2^2+(x_2+y_1)^2\big),
    \end{align}
    and $\tilde{h}_4^{(r)}(\mathbf x_0,\mathbf y_0)$, $\tilde{h}_5^{(r)}(\mathbf x_0,\mathbf y_0)$ are respectively the first-order Taylor expansions of functions $h_4(\mathbf x_0,\mathbf y_0)$ and $h_5(\mathbf x_0,\mathbf y_0)$ at the local point $(\mathbf x_0^{(r)},\mathbf y_0^{(r)})$, i.e.,\vspace{-0.1cm}
    \begin{align}
        \tilde{h}_4^{(r)}(\mathbf x_0,\mathbf y_0)=&(x_1^{(r)}+y_2^{(r)})(x_1+y_2)+x_2^{(r)}x_2+y_1^{(r)}y_1-{h}_4(\mathbf x_0^{(r)},\mathbf y_0^{(r)}),\\
        \tilde{h}_5^{(r)}(\mathbf x_0,\mathbf y_0)=&x_1^{(r)}x_1+y_2^{(r)}y_2+(x_2^{(r)}+y_1^{(r)})(x_2+y_1)-{h}_5(\mathbf x_0^{(r)},\mathbf y_0^{(r)}).
    \end{align}
    
    \vspace{-0.2cm}
	In particular, we have both  equalities in \eqref{eq:le_app3} hold at the local point $(x_1^{(r)},x_2^{(r)},y_1^{(r)},y_2^{(r)})$. The left-hand side function in \eqref{eq:le_app3} is concave, while the right-hand side of \eqref{eq:le_app3} is a convex function. 
\end{lemma}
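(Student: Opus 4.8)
The plan is to reduce the whole statement to one algebraic identity and then treat the two bounds independently. First I would verify the coupling between $h_3$ and the pair $(h_4,h_5)$: expanding the definitions and cancelling the common square terms gives
\begin{equation}
h_4(\mathbf x_0,\mathbf y_0)-h_5(\mathbf x_0,\mathbf y_0)=x_1y_2-x_2y_1,
\end{equation}
so that $h_3(\mathbf x_0,\mathbf y_0)=(x_1y_2-x_2y_1)^2=\big(h_4(\mathbf x_0,\mathbf y_0)-h_5(\mathbf x_0,\mathbf y_0)\big)^2$. This identity is the engine of the whole proof, because $h_4$ and $h_5$ are manifestly nonnegative convex quadratic forms (each a sum of squares), so their first-order behaviour is easy to control.

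For the upper bound I would use the tangent-underestimator property of the convex functions $h_4,h_5$, namely $h_4\ge\tilde h_4^{(r)}$ and $h_5\ge\tilde h_5^{(r)}$ everywhere. These yield $h_4-h_5\le h_4-\tilde h_5^{(r)}$ and $h_5-h_4\le h_5-\tilde h_4^{(r)}$, hence
\begin{equation}
|h_4-h_5|\le\max\big\{h_4-\tilde h_5^{(r)},\,h_5-\tilde h_4^{(r)}\big\}.
\end{equation}
The right-hand side is a maximum of two convex functions (each a convex function minus an affine one) and therefore convex; moreover it dominates $|h_4-h_5|\ge0$, so it is itself nonnegative. Squaring an inequality between nonnegative quantities preserves it, and the square of a nonnegative convex function is convex (composition with the nondecreasing convex map $t\mapsto t^2$ on $[0,\infty)$), which delivers the stated convex upper bound.

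For the lower bound I would start from the parallelogram-type rewriting $h_3=2h_4^2+2h_5^2-(h_4+h_5)^2$, which follows from $(h_4-h_5)^2+(h_4+h_5)^2=2(h_4^2+h_5^2)$. Here $-(h_4+h_5)^2$ is concave because $h_4+h_5$ is nonnegative and convex. The obstacle is the two remaining convex terms $h_4^2$ and $h_5^2$, which destroy concavity; I would replace each by its tangent underestimator at the local point. Since $h_i^2$ is convex, its first-order expansion $2h_i^{(r)}\tilde h_i^{(r)}-(h_i^{(r)})^2$ lies below it, giving $h_i^2\ge 2h_i^{(r)}\tilde h_i^{(r)}-(h_i^{(r)})^2$ for $i\in\{4,5\}$. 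Substituting these into the rewriting produces an affine-plus-concave expression, hence concave, which coincides with the stated left-hand side up to the local constants $-2(h_4^{(r)})^2-2(h_5^{(r)})^2$ that are needed to enforce tightness.

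Finally I would check tightness at $(\mathbf x_0^{(r)},\mathbf y_0^{(r)})$, where $\tilde h_i^{(r)}=h_i^{(r)}$: the maximum in the upper bound collapses to $|h_4^{(r)}-h_5^{(r)}|$, the tangent underestimators become equalities, and both bounds reduce to $(h_4^{(r)}-h_5^{(r)})^2=h_3(\mathbf x_0^{(r)},\mathbf y_0^{(r)})$. I expect the upper bound to be the delicate step: one must establish nonnegativity of the convex majorant \emph{before} squaring, since squaring only respects order between nonnegative quantities. The lower bound is comparatively mechanical once the rewriting $h_3=2h_4^2+2h_5^2-(h_4+h_5)^2$ is in place and the convex squares are linearized by their tangents.
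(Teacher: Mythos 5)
Your proposal follows exactly the paper's route: the identity $h_3=(h_4-h_5)^2$, the parallelogram rewriting $h_3=2h_4^2+2h_5^2-(h_4+h_5)^2$ with tangent linearization of the convex squares $h_4^2,h_5^2$ for the lower bound, and the rewriting $\sqrt{h_3}=|h_4-h_5|=\max\{h_4-h_5,\,h_5-h_4\}$ with tangent underestimators followed by squaring of a nonnegative convex majorant for the upper bound. Your closing remark about the constants is well taken and actually exposes a defect in the lemma as printed: the first-order condition yields $2h_i^2\geq 4h_i(\mathbf x_0^{(r)},\mathbf y_0^{(r)})\tilde h_i^{(r)}(\mathbf x_0,\mathbf y_0)-2\big(h_i(\mathbf x_0^{(r)},\mathbf y_0^{(r)})\big)^2$, so the displayed lower bound is missing the constants $-2\big(h_4(\mathbf x_0^{(r)},\mathbf y_0^{(r)})\big)^2-2\big(h_5(\mathbf x_0^{(r)},\mathbf y_0^{(r)})\big)^2$; without them the left-hand side evaluates at the local point to $h_3+2h_4^2+2h_5^2$, so the claimed inequality and tightness fail there, whereas the argument you (and the paper's proof, if carried out explicitly) give produces the corrected, valid bound.
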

\begin{proof}
    To start with, we first focus on the proof of left-hand side inequality on $h_3(\mathbf x_0,\mathbf y_0)$ in \eqref{eq:le_app3}. 
    Via observations on function $h_3(\mathbf x_0,\mathbf y_0)$, we can find out that \vspace{-0.1cm}
    \begin{align}
        h_3(\mathbf x_0,\mathbf y_0)=&(h_4(\mathbf x_0,\mathbf y_0)-h_5(\mathbf x_0,\mathbf y_0))^2\nonumber \\
        =&2(h_4(\mathbf x_0,\mathbf y_0))^2+2(h_5(\mathbf x_0,\mathbf y_0))^2-(h_4(\mathbf x_0,\mathbf y_0)+h_5(\mathbf x_0,\mathbf y_0))^2.\label{eq:le_aux1}
    \end{align}
    From the convexity of nonnegative functions $h_4(\mathbf x_0,\mathbf y_0)$ and $h_5(\mathbf x_0,\mathbf y_0)$, we have both $(h_4(\mathbf x_0,\mathbf y_0))^2$ and $(h_5(\mathbf x_0,\mathbf y_0))^2$ are convex. By applying in  \eqref{eq:le_aux1} the first-order condition of convex functions on $(h_4(\mathbf x_0,\mathbf y_0))^2$ and $(h_5(\mathbf x_0,\mathbf y_0))^2$ at the local point $(\mathbf x_0^{(r)},\mathbf y_0^{(r)})$, the left inequality in \eqref{eq:le_app3} can be directly verified. As both 
    $4h_4(\mathbf x_0^{(r)},\mathbf y_0^{(r)})\tilde{h}_4^{(r)}(\mathbf x_0,\mathbf y_0)$ and $4h_5(\mathbf x_0^{(r)},\mathbf y_0^{(r)})\tilde{h}_5^{(r)}(\mathbf x_0,\mathbf y_0)$ in~\eqref{eq:le_app3} are linear functions, the left-hand side function in~\eqref{eq:le_app3} can be proved to be concave. In addition, the first-order expansions at the local point $(\mathbf x_0^{(r)},\mathbf y_0^{(r)})$ also ensure the equality in left-hand side inequality of~\eqref{eq:le_app3} holds at the local point. 

    As for the proof of right-hand side inequality in \eqref{eq:le_app3}, we first have\vspace{-0.1cm}
    \begin{align}
        &\sqrt{h_3(\mathbf x_0,\mathbf y_0)}=|h_4(\mathbf x_0,\mathbf y_0)-h_5(\mathbf x_0,\mathbf y_0)|\nonumber \\
        \!\!\!\!=&\max\{h_4(\mathbf x_0,\!\mathbf y_0)\!-\!h_5(\mathbf x_0,\mathbf y_0),h_4(\mathbf x_0,\mathbf y_0)\!-\!h_5(\mathbf x_0,\mathbf y_0)\!\}.\!\!\label{eq:le_aux2}
    \end{align}
    Further considering the convexity of both $h_4(\mathbf x_0,\mathbf y_0)$ and $h_5(\mathbf x_0,\mathbf y_0)$, we have \vspace{-0.1cm}
    \begin{align}
        h_4(\mathbf x_0,\mathbf y_0)-h_5(\mathbf x_0,\mathbf y_0)\leq h_4(\mathbf x_0,\mathbf y_0)-\tilde{h}_5^{(r)}\!(\mathbf x_0,\mathbf y_0),\label{eq:le_aux3}\\
        h_5(\mathbf x_0,\mathbf y_0)-h_4(\mathbf x_0,\mathbf y_0)\leq h_5(\mathbf x_0,\mathbf y_0)-\tilde{h}_4^{(r)}\!(\mathbf x_0,\mathbf y_0).\label{eq:le_aux4}
    \end{align}
    Inserting \eqref{eq:le_aux3} and \eqref{eq:le_aux4} into \eqref{eq:le_aux2}, we have \vspace{-0.1cm}
    \begin{align}
        &\sqrt{h_3(\mathbf x_0,\mathbf y_0)}
        \leq\max\{h_4(\mathbf x_0,\mathbf y_0)\!-\!\tilde{h}_5^{(r)}\!(\mathbf x_0,\mathbf y_0),h_5(\mathbf x_0,\mathbf y_0)\!-\!\tilde{h}_4^{(r)}\!(\mathbf x_0,\mathbf y_0)\!\},\label{eq:le_aux5}
    \end{align}

    \vspace{-0.2cm}
    \noindent{}which also ensures the right-hand side term in \eqref{eq:le_aux5} to be nonnegative and convex, as well as the convexity of right-hand side function in \eqref{eq:le_app3}. Further applying square operation on both sides of \eqref{eq:le_aux5} will lead to the right-hand side inequality in \eqref{eq:le_app3}. Similarly, the first-order Taylor expansions have guaranteed the equality holds at the local point $(\mathbf x_0^{(r)},\mathbf y_0^{(r)})$. 
\end{proof}
\vspace{-0.2cm}
Clearly, the function ${\psi}_{\mathbf s,n_1,n_2}(\mathbf x,\mathbf y)$ can be transferred into the same format as $h_3(x_1,y_1,x_2,y_2)$ with $x_1=x_{n_1}-s_x$, $y_1=y_{n_1}-s_y$, $x_2=x_{n_2}-s_x$ and $y_2=y_{n_2}-s_y$. 
Thus, via applying the inequality \eqref{eq:le_app3} in Lemma~\ref{le:crb_app2}, we can construct two functions, i.e., $\tilde{\psi}_{\mathbf s,n_1,n_2}^{(r)}(\mathbf x,\mathbf y)$ and $\hat{\psi}_{\mathbf s,n_1,n_2}^{(r)}(\mathbf x,\mathbf y)$, such that\vspace{-0.1cm}
\begin{equation}\vspace{-0.1cm}
    \tilde{\psi}_{\mathbf s,n_1,n_2}^{(r)}(\mathbf x,\mathbf y)\leq {\psi}_{\mathbf s,n_1,n_2}(\mathbf x,\mathbf y)\leq \hat{\psi}_{\mathbf s,n_1,n_2}^{(r)}(\mathbf x,\mathbf y),
\end{equation}
where $\tilde{\psi}_{\mathbf s,n_1,n_2}^{(r)}(\mathbf x,\mathbf y)$ is concave, $\hat{\psi}_{\mathbf s,n_1,n_2}^{(r)}(\mathbf x,\mathbf y)$ is convex, and both equalities hold at local point $(\mathbf x^{(r)},\mathbf y^{(r)},\mathbf a^{(r)})$. By replacing the two occurrences of ${\psi}_{\mathbf s,n_1,n_2}(\mathbf x,\mathbf y)$ in \eqref{eq:crb_c_app2} respectively  with $\hat{\psi}_{\mathbf s,n_1,n_2}^{(r)}(\mathbf x,\mathbf y)$ and $\tilde{\psi}_{\mathbf s,n_1,n_2}^{(r)}(\mathbf x,\mathbf y)$, we can accomplish the concave approximation as \vspace{-0.1cm}
\begin{align}
	&\big(\frac{\eta^2}{d_{n_1}^6d_{n_2}^6}+\frac{8\eta}{d_{n_1}^6d_{n_2}^4}+\frac{8\eta}{d_{n_1}^4d_{n_2}^6}+\frac{64}{d_{n_1}^4d_{n_2}^4}\big)\cdot \psi_{\mathbf s,n_1,n_2}(\mathbf x,\mathbf y)\nonumber \\
	\geq &-\frac{{E}_{4,n_1,n_2}^{(r)}}{2}\big({E}_{1,n_1,n_2}^{(r)}d_{n_1}^2+{E}_{2,n_1,n_2}^{(r)}d_{n_2}^2\big)^2-\frac{1}{2{E}_{4,n_1,n_2}^{(r)}\!\!}\hat{\psi}_{\mathbf s,n_1,n_2}(\mathbf x,\mathbf y)^2\!+\!E_{3,n_1,n_2}^{(r)}\tilde{\psi}_{\mathbf s,n_1,n_2}(\mathbf x,\mathbf y)\nonumber \\
    \triangleq&g_{c,n_1,n_2,\mathbf s}^{(r)}(\mathbf x,\mathbf y), \label{eq:crb_c_app3}
\end{align}

\vspace{-0.2cm}
    \noindent{}where the positive constants $E_{3,n_1,n_2}^{(r)}$ and $E_{4,n_1,n_2}^{(r)}$ have ensured the inequality in \eqref{eq:crb_c_app3} still holds. The convexity of non-negative function $\hat{\psi}_{\mathbf s,n_1,n_2}^{(r)}(\mathbf x,\mathbf y)$ and the concavity of function $\tilde{\psi}_{\mathbf s,n_1,n_2}^{(r)}(\mathbf x,\mathbf y)$ result in the concavity of the eventually  obtained function $g_{c,n_1,n_2,\mathbf s}^{(r)}(\mathbf x,\mathbf y)$ in \eqref{eq:crb_c_app3}. 
Further according to Lemma~\ref{le:crb_app2}, we have the equality in \eqref{eq:crb_c_app3} also holds at local point $(\mathbf x^{(r)},\mathbf y^{(r)},\mathbf a^{(r)})$. 

As a result, we can accomplish the concave approximation construction for $\Theta_{a,\mathbf s}^{[m]}\Theta_{b,\mathbf s}^{[m]}-(\Theta_{c,\mathbf s}^{[m]})^2$ as\vspace{-0.1cm}
\begin{align}
	\Theta_{a,\mathbf s}^{[m]}\Theta_{b,\mathbf s}^{[m]}-(\Theta_{c,\mathbf s}^{[m]})^2&
	\geq \sum_{n_1=m}^{m+L-2}\sum_{n_2=n_1+1}^{m+L-1}g_{c,n_1,n_2,\mathbf s}^{(r)}(\mathbf x,\mathbf y)\nonumber\\
    &\triangleq g_{2,m,\mathbf s}^{(r)}(\mathbf x,\mathbf y),
\end{align}

\vspace{-0.2cm}
    \noindent{}where the equality is always guaranteed at local point  $(\mathbf x^{(r)},\mathbf y^{(r)},\mathbf a^{(r)})$. 
\vspace{-0.1cm}
\subsection{Iterative Algorithm}
Via replacing the objective in (P1) and the constraint \eqref{eq:p1_con3} respectively with the corresponding concave and convex approximation, we can obtain the following subproblem established based on local point $(\mathbf x^{(r)},\mathbf y^{(r)},\mathbf a^{(r)})$:\vspace{-0.1cm}
\begin{subequations}
	\begin{align}
		(\text{P2}): \max_{\mathbf x,\mathbf y,\mathbf a} &~~ \min_{k\in\mathcal K}\{\sum_{n=1}^Nf_{n,k}^{(r)}(\mathbf x,\mathbf y,\mathbf a)\} \\
		\mathrm{s.t.}~&~~(x_{n+1}\!-\!x_{n})^2\!+\!(y_{n+1}\!-\!y_{n})^2\!\leq\! \delta^2V^2,\forall n,\\
		&~~(x_n,y_x)\in\Omega,~\forall n\in\mathcal N,\\
		&~~g_{1,m,\mathbf s}^{(r)}(\mathbf x,\mathbf y)-\xi_lg_{2,m,\mathbf s}^{(r)}(\mathbf x,\mathbf y)\leq 0,~\forall m\in\mathcal N,~\forall \mathbf s\in\mathcal F,\\
		&~~ \sum_{k=1}^Ka_{k,n}=1,~\forall n\in\mathcal N,\\
		&~~ 0\leq a_{k,n}\leq 1,~\forall k\in\mathcal K,~\forall n\in\mathcal N,
	\end{align}
\end{subequations}

\vspace{-0.2cm}
    \noindent{}which is a convex problem. 
Note that the approximation and inequality we have guaranteed in the entire approximation process, have ensured that optimizing (P2) will lead to an objective improvement with respect to the local point $(\mathbf x^{(r)},\mathbf y^{(r)},\mathbf a^{(r)})$ without violating the constraints. This implies that updating the local point via iteratively solving (P2) will lead to a converged superior solution. 

Hence, based on the above discussions, we are motivated to propose an iterative algorithm for efficiently addressing the problem (P1). More specifically, we start with an initial local point  $(\mathbf x^{(0)},\mathbf y^{(0)},\mathbf a^{(0)})$. For each iteration $r$, we construct the convex approximation (P2) based on local point  $(\mathbf x^{(r)},\mathbf y^{(r)},\mathbf a^{(r)})$. After solving (P2), the obtained better solution will be applied as the local point  $(\mathbf x^{(r+1)},\mathbf y^{(r+1)},\mathbf a^{(r+1)})$ for the next iteration. In such manner, the objective value at the local point will be constantly improved without violating any on-demand sensing constraints. Since the objective is upper bounded, the algorithm will eventually converge to a suboptimal solution. 

Hereby, we also provide a complexity analysis on our proposed algorithm. 
Following the complexity analysis in \cite{free_space1}, we have in each iteration the computational complexity for solving convex approximation (P2) with $(K+2)N$ variables is given by $\mathcal O((K+2)^4N^4)=\mathcal O (K^4N^4)$. Denoting by $\varepsilon$ the iteration number needed for convergence, we can obtain the complexity of our proposed iterative algorithm as $\mathcal O (\varepsilon K^4N^4)$.
\vspace{-0.1cm}
\subsection{Feasibility Discussion and  Trajectory Initialization}
\vspace{-0.1cm}
Notice that our proposed iterative algorithm relies on a feasible initial point  $(\mathbf x^{(0)},\mathbf y^{(0)},\mathbf a^{(0)})$ for (P1). However, in practice, the problem (P1) may not be feasible, on account of the strict on-demand sensing constraints \eqref{eq:p1_con2} and \eqref{eq:p1_con3}. Even when (P1) is feasible, it may still be strenuous to obtain a feasible initial point for (P1). In this subsection, to cope with these issues, we present an efficient approach for the feasibility inspection of problem (P1), where a feasible initial solution will be constructed in case with feasible (P1). 

Via observing (P1), we can find out that a solution $(\mathbf x,\mathbf y,\mathbf a)$ fulfilling all constraints of (P1) except \eqref{eq:p1_con3} can be easily constructed. Thus, the key point in feasibility discussion is to determine if we can construct a feasible UAV trajectory fulfilling the constraint \eqref{eq:p1_con3}. 
As a result, the feasibility of problem (P1) is equivalent to the case having the following problem (P3) with a negative objective.\vspace{-0.1cm}
\begin{align}
		(\text{P3}): \min_{\mathbf x,\mathbf y} &~ \max_{m\in\mathcal N,\mathbf s\in\mathcal F}\{\Theta_{a,\mathbf s}^{[m]}\!+\!\Theta_{b,\mathbf s}^{[m]}\!-\! \xi_l(\Theta_{a,\mathbf s}^{[m]}\Theta_{b,\mathbf s}^{[m]}\!-\!(\Theta_{c,\mathbf s}^{[m]})^2)\} \\
		\mathrm{s.t.}&~\eqref{eq:p1_con1}~\text{and}~\eqref{eq:p1_con2}.\nonumber 
\end{align}

\vspace{-0.2cm}
\noindent{}Since (P3) is clearly a nonconvex problem, we can directly adopt the convex approximations constructed in previous subsections and approach to a feasible initial solution of (P1) in a similarly iterative manner. Specifically, we start with a UAV trajectory  feasible for (P3). In each iteration, we construct and solve the following convex approximation for (P3).\vspace{-0.1cm}
\begin{align}
		(\text{P4}): \min_{\mathbf x,\mathbf y} &~ \max_{m\in\mathcal N,\mathbf s\in\mathcal F}\{g_{1,m,\mathbf s}^{(r)}(\mathbf x,\mathbf y)-\xi_lg_{2,m,\mathbf s}^{(r)}(\mathbf x,\mathbf y)\} \\
		\mathrm{s.t.}&~\eqref{eq:p1_con1}~\text{and}~\eqref{eq:p1_con2}.\nonumber 
\end{align}
If iteratively solving (P4) converges to a solution with positive objective in (P3), we can then identify the problem (P1) as infeasible. Conversely, in each iteration, once we get a UAV trajectory with negative objective in (P4), the guaranteed inequalities in our proposed process for approximation constructions will ensure the trajectory also resulting in negative objective in (P3), such that a feasible UAV trajectory $(\mathbf x^{(0)},\mathbf y^{(0)})$ for (P1) is obtained. Initializing user assignment $\mathbf a^{(0)}$ with all entries equal to $\frac{1}{K}$, we can then obtain a feasible initial point $(\mathbf x^{(0)},\mathbf y^{(0)},\mathbf a^{(0)})$ for (P1).

The overall algorithm flow for addressing (P1) has been summarized in Algorithm~\ref{algorithm1}.

\begin{algorithm}[!t]\small
	\algsetup{linenosize=\small}
	\caption{\bf{ Overall Algorithm Flow for Addressing (P1)}}
	\begin{algorithmic}
		\STATE \noindent{\bf{$\!\!\!\!\!\!$Feasibility Inspection and Initialization}} \\
		\STATE   \noindent{\bf{a)}} ~~Initialize a feasible UAV trajectory $(\hat{\mathbf x},\hat{\mathbf y})$ for (P3). 
		\STATE \noindent{\bf{b)}} ~~Construct (P4) based on local point $(\hat{\mathbf x},\hat{\mathbf y})$.
        \STATE \noindent{\bf{c)}} ~~Solve (P4) and obtain new point $(\hat{\mathbf x}^*,\hat{\mathbf y}^*)$.
        \STATE \noindent{\bf{d)}} ~~{\bf If}~~(P3) objective at $(\hat{\mathbf x}^*,\hat{\mathbf y}^*)$ is negative 
        \STATE ~~~~~~~~~ Define $(\mathbf x^{(0)},\mathbf y^{(0)})=(\hat{\mathbf x}^*,\hat{\mathbf y}^*)$ and go to {\bf e)}.
        \STATE ~~~~ \noindent{\bf{Elseif}}~~(P3) objective reduction is sufficiently small \\
		
	\STATE ~~~~~~~~~ (P1) is infeasible and stop the algorithm.\\ 
		\STATE ~~~~ \noindent{\bf{Else}} \\
		
		\STATE ~~~~~~~~~ Update $(\hat{\mathbf x},\hat{\mathbf y})=(\hat{\mathbf x}^*,\hat{\mathbf y}^*)$ and back to {\bf b)}.\\
		
		
		\STATE ~~~~ \noindent{\bf{End}}
        \STATE   \noindent{\bf{e)}} ~~Initialize $\mathbf a^{(0)}$ and set $r=0$.
		\STATE \noindent{ \bf{$\!\!\!\!\!\!\!\!$Iteration}} \\
		
		\STATE \noindent{\bf{f)}} ~~Construct 
		(P2) 		based on $(\mathbf x^{(r)},\mathbf y^{(r)},\mathbf a^{(r)})$.
		
		\STATE \noindent{\bf{g)}} ~~Solve (P2) and obtain $(\mathbf x^{(r*)},\mathbf y^{(r*)},\mathbf a^{(r*)})$.\\ 
		
		\STATE \noindent{\bf{h)}} ~ \noindent{\bf{If}} ~   (P1) objective improvement is sufficiently small\\
		\STATE ~~~~~~~~~ Define final solution $(\mathbf x^{*},\mathbf y^{*},\mathbf a^{*})\!=\!(\mathbf x^{(r*)},\mathbf y^{(r*)},\mathbf a^{(r*)})$. 
		\STATE ~~~~~~~~~ Stop the algorithm.
		\STATE ~~~~ \noindent{\bf{Else}} \\
		
		\STATE ~~~~~~~~~ Update $(\mathbf x^{(r+1)},\mathbf y^{(r+1)},\mathbf a^{(r+1)})=(\mathbf x^{(r*)},\mathbf y^{(r*)},\mathbf a^{(r*)})$.
		
		
		\STATE ~~~~~~~~~ $r=r+1$ and back to \bf{f)}.
		
		\STATE ~~~~ \noindent{\bf{End}} \\
	\end{algorithmic}
	\label{algorithm1}
\end{algorithm}
\vspace{-0.2cm}
\section{Simulation Results}
This section provides numerical results to evaluate the performance of our proposed optimization scheme for joint trajectory and user assignment design scheme in the considered UAV-enabled ISAC networks. 
Unless otherwise statements, the network in the simulations is build up in a 400m $\times$ 400m square area which distributes $K=5$ communication users and a on-demand sensing region located at the center of the area with radius $r=50$m. Other parameters are set as: $T=100$s, $V=10$m/s, $H=20$m, $N=25$, $L=5$, $\beta=-60$dB, $P=20$dBm, $\sigma^2=\sigma_0^2=-100$dBm, $\sigma_rG_r=53$dB, 
$a_t=100$, $\xi_l=10$m$^2$, $\bar d_0=250$m.
\vspace{-0.3cm}
\subsection{Characterized Results of On-Demand Sensing}
\vspace{-0.1cm}
As discussed in Section III, we have characterized the on-demand detection constraint as a restricted deployment area for UAV and the on-demand localization constraint as localization CRB constraints over finite reference target points. In this subsection, we first depict the characterization results under different sensing accuracy requirements.  

In Fig.~\ref{detection_st}, over a nonconvex sensing region, we demonstrate 
the characterized UAV deployment region 
with different transmit power levels $P$. As depicted in Fig.~\ref{detection_st}, the simulated nonconvex sensing region consists of two non-intersecting circular areas. 
The corresponding deployment region ${\Omega}$ for UAV shows to be convex in 
Fig.~\ref{detection_st}, 
\begin{figure}[!t]
	\centering
	\includegraphics[width=0.6\linewidth, trim= 0 20 0 20 ]{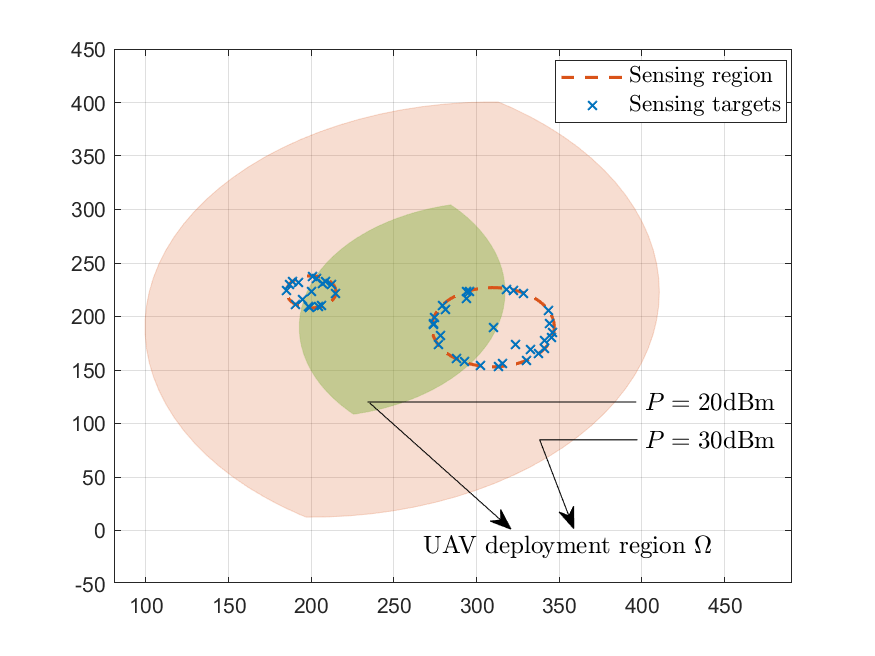}
	\caption{UAV deployment restriction owing to on-demand detection constraints.}\label{detection_st}
		\vspace{-.1cm}
\end{figure}
confirming the statements in our proposed Lemma~\ref{le:detect_region}. Furthermore, the green region corresponds to the UAV deployment region under a lower transmission power, $P=20$dBm, where a lower SNR results in a smaller UAV detection area, thus limiting the degree of freedom for UAV movement. In contrast, the pink region represents the deployment region ${\Omega}$ when $P=30$dBm. With higher transmit power, the accordingly expanded UAV
deployment area 
allows the UAV to execute tasks more freely. In addition, the reference sensing targets for on-demand localization have been depicted in Fig.~\ref{detection_st}, distributed on both circular serving areas. 

Next, we more deeply explore the influence of the maximum allowable localization CRB $\xi_l$ on the required number of reference sensing targets. 
\begin{figure}[!t]
	\centering
	\includegraphics[width=0.6\linewidth, trim= 0 20 0 20 ]{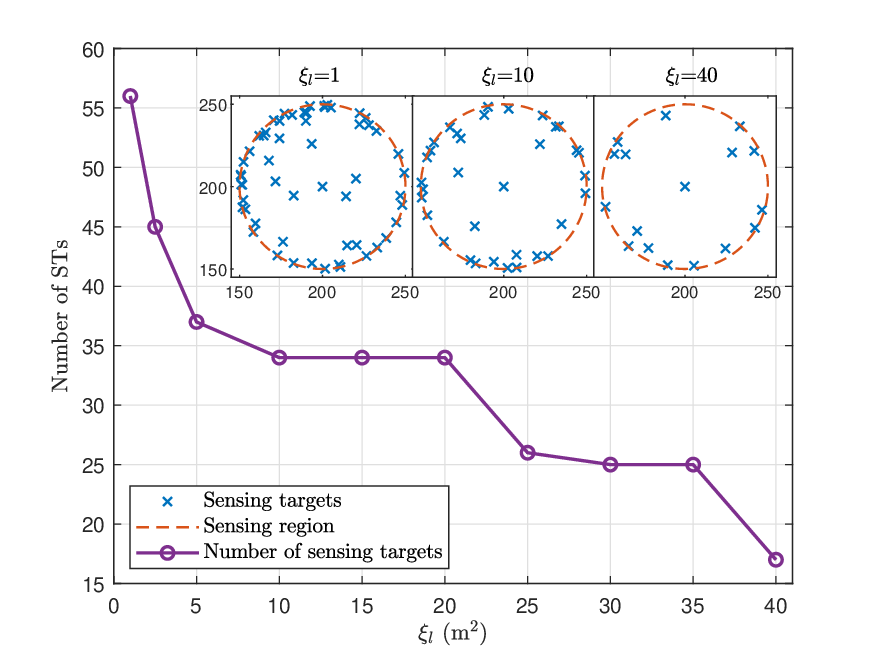}
	\caption{Requested number of reference sensing targets for on-demand localization with different $\xi_l$.}\label{crb_st}
		\vspace{-.5cm}
\end{figure}
As indicated in Fig.~\ref{crb_st}, when the CRB requirement $\xi_l$ decreases, the number of reference sensing targets increases, which reflects that a stricter CRB constraint necessitates more reference sensing targets to guarantee the localization QoS. On the other hand, when the CRB constraint is more relaxed, fewer sensing targets are needed, as a certain-level degradation of localization accuracy can be tolerated. The three inset images provide clearer visual comparisons of requested reference sensing targets at different levels of $\xi_l$, with the target density decreasing in $\xi_l$. 
In particular, we can notice that the reference sensing targets are more allocated near the region boundary, showing that boundary points are more like the bottleneck of localization accuracy. Even though, together with these boundary points, a few interior reference points are still needed to fully represent the whole region in on-demand localization accuracy evaluations.
\vspace{-0.2cm}
\subsection{Validation of Proposed Optimization Scheme}
\vspace{-0.1cm}
Afterwards, to reveal the significant benefits of our proposed optimization 
scheme for UAV-enabled ISAC with on-demand sensing, we introduce 
the following scheme proposed in \cite{Jing_1,Gu_1} as the benchmark. 
\begin{itemize}
    \item {\bf Adj gradient descend:}     
    To deal with localization CRB $\Phi_{\mathbf s}^{[m]}(\mathbf x,\mathbf y)$, at a local point 
    in each iteration, this scheme approximates the complicated function $\Phi_{\mathbf s}^{[m]}$ as 
    the first-order Taylor expansion~\cite{Jing_1,Gu_1}, 
    thereby achieving a convex approximation of the non-convex localization CRB constraints. Ignoring the approximation errors in higher-order Taylor expansion terms, convex optimization tools are employed to solve the approximated problem. 
    Then, the direction from the obtained solution to the local point 
    is taken as the descent gradient. 
    An adjustable step size 
    is selected along the  gradient to maximize the objective function, fulfilling the solution feasibility to a certain extent. The obtained solution will be reapplied as the local point for the next iteration until convergence. However, when the local point reaches the boundary of feasible set, the descend direction is very likely leading all adjusted solutions infeasible, such that the the algorithm will be interrupted at a low-performance point.  
    \item {\bf Fix gradient descend:} Note that the Adj gradient descend scheme 
    may 
    choose an excessively large step size in the early stages of iterations, leading to convergence to a low-performance point or an infeasible solution. 
    We also provide the scheme with always fixed step size as the second benchmark, which can potentially achieve a better solution than Adj gradient descend with much more iteration rounds. Even though, the iterations in both existing benchmarks may be interrupted due to reaching the infeasible region. 
\end{itemize}

We first illustrate the convergence behavior of our proposed scheme and two benchmarks in Fig.~\ref{convergnce}. 
\begin{figure}[!t]
 \centering
    \subfigure[]{
  \includegraphics[width=0.6\linewidth, trim=0 10 0 10]{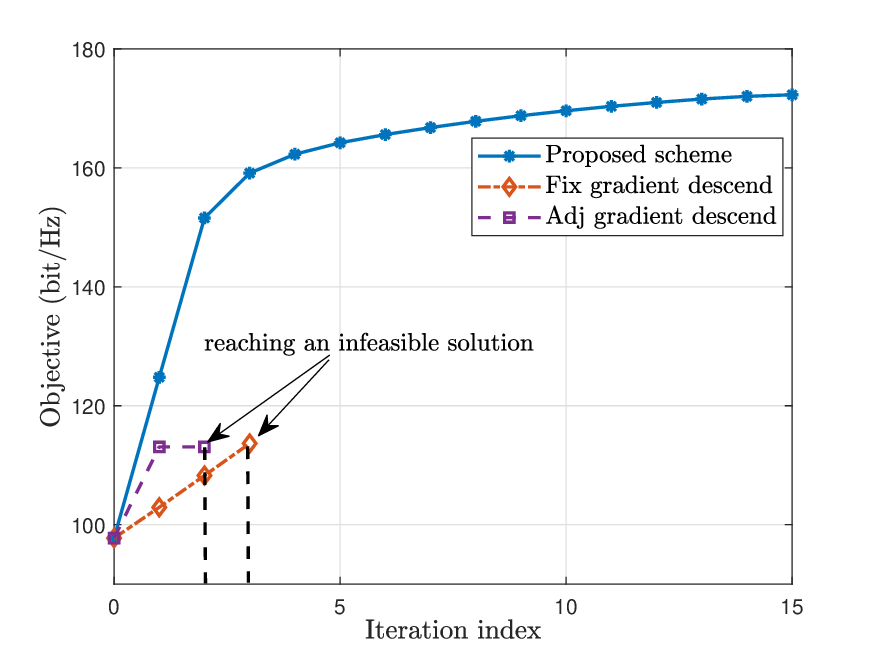}}\vspace{0.05cm}
 \subfigure[]{
  \includegraphics[width=0.6\linewidth, trim=0 10 0 10]{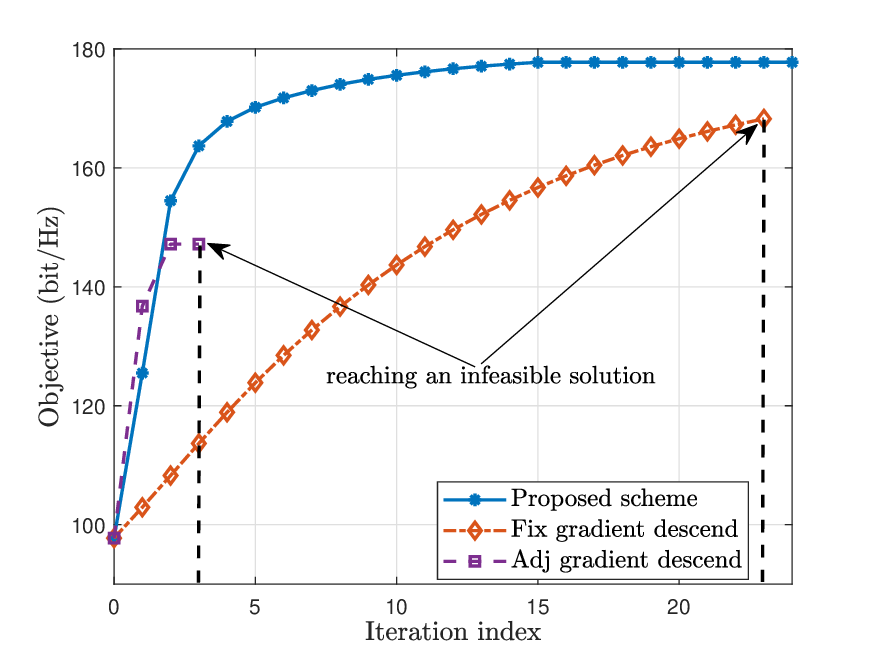}}
   \vspace{-.2cm}
 \caption{Convergence behavior of  proposed scheme and benchmarks with  $\xi_l=2.5$ in (a) and $\xi_l=5$ in (b).}
\label{convergnce}
 \vspace{-.5cm}
\end{figure}
In particular, we provide the objective value after each iteration in Fig.~\ref{convergnce}(a) for $\xi_l=2.5$ and in Fig.~\ref{convergnce}(b) for $\xi_l=5$. It can be observed that the communication performances achieved by all three schemes experience an increase in iterations, while the proposed scheme converges to a much better solution outperforming benchmarks. 
For both benchmarks simply applying the first-order Taylor approximation to deal with localization CRB, it is evident that Adj gradient descend scheme surpasses Fix gradient descend scheme in terms of convergence speed, while it ultimately under-performs Fix gradient descend in final converged performance. This is due to the fact that  although an adjustable step size may enable the algorithm to select a local optimal solution with better performance during the initial iterations, the feasible region error introduced by the non-tight upper bound convex approximation can lead to a higher likelihood of becoming trapped in a local optimum or encountering infeasible solutions in subsequent iterations. Furthermore, Unlike the proposed scheme which exhibits favorable steady convergence behavior, we can see that the iterations of benchmarks are all stopped after only certain counts, ending up at a position with very poor performance where such phenomenon is more pronounced under stricter CRB constraints $\xi_l=2.5$. 

The phenomenon of interrupted iterations in both benchmarks, shown in Fig.~\ref{convergnce}, is due to non-tight upper bound convex approximation which extends the feasible region impractically and may lead to the premature termination of iterations due to the infeasibility of solutions. To verify this issue and reveal the benefits of our proposed scheme in guaranteeing on-demand localization CRB, we provide  in Fig.~\ref{optimized crb} the achieved maximum localization CRBs for 
all three schemes over different time slots, respectively for cases $\xi_l=2.5$ and $\xi_l=5$. 
\begin{figure}[!t]
 \centering
    \subfigure[]{
  \includegraphics[width=0.6\linewidth, trim=0 10 0 0]{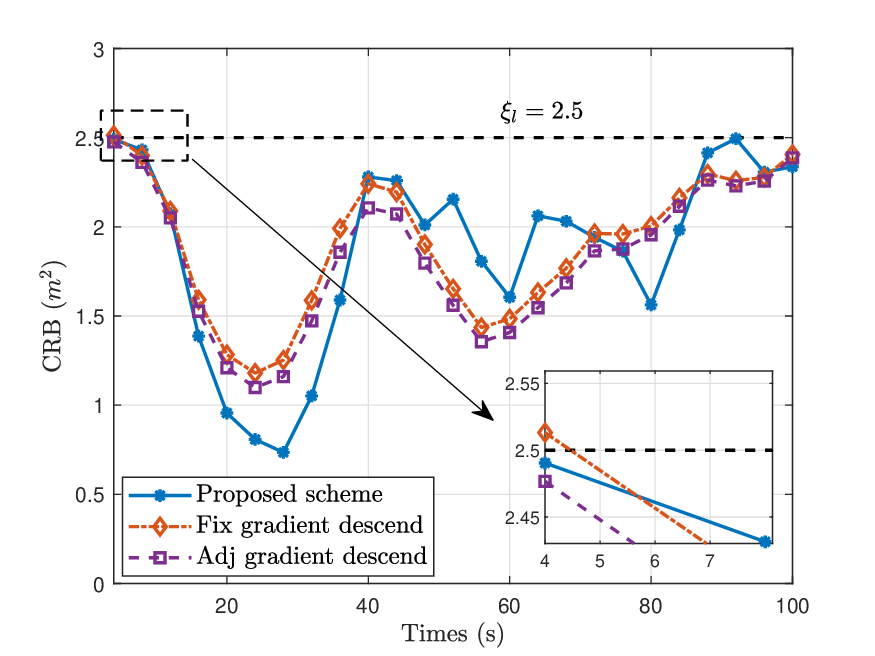}}
    \subfigure[]{
  \includegraphics[width=0.6\linewidth, trim=0 10 0 0]{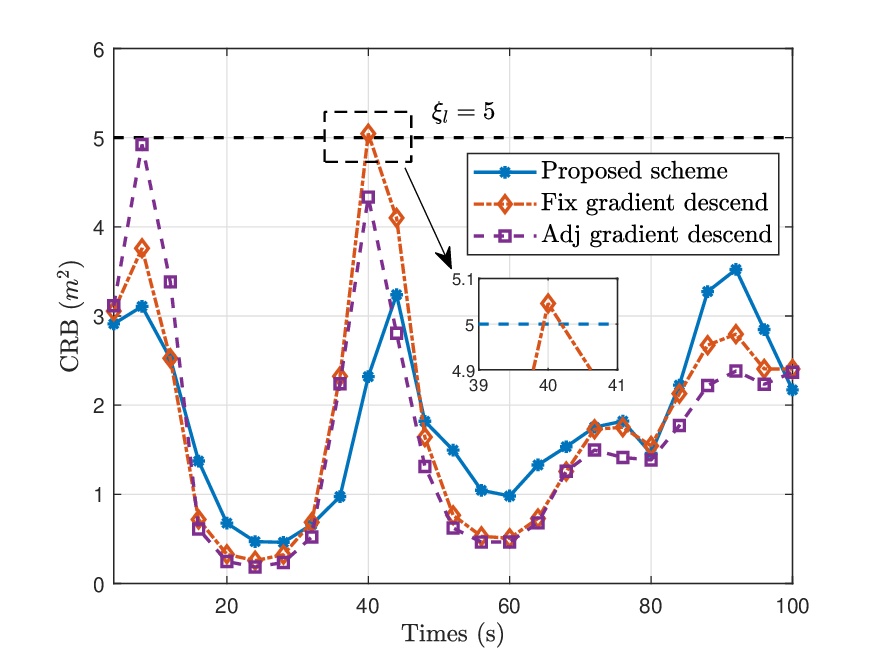}}
 \caption{Localization CRBs of optimized solution of proposed scheme and benchmarks over time when $\xi_l=2.5$ and $\xi_l=5$.}
\label{optimized crb}
 \vspace{-.1cm}
\end{figure}
In Fig.~\ref{optimized crb}, the CRBs of our proposed scheme always satisfies the CRB limitation $\xi_l$, as the proposed tightly upper-bound convex approximation ensures solution feasibility 
in each iteration. By contrast, the CRBs of Fix gradient descend scheme are over the CRB limitation at certain time which is precisely the reason for the premature termination of iteration. It is worthwhile to note that although the CRBs of Adj gradient descend scheme are below $\xi_l$, all points along the descent gradient from optimized solution to local point are all infeasible except the local point, such that the last iteration of Adj gradient descend scheme stops at the low-performance point, as shown in Fig.~\ref{convergnce}. With relaxed CRB limitation exhibiting greater tolerance for such imprecise approximations, the benchmarks can undergo more iterations before reaching an infeasible solution, but still with worse performance than our proposed scheme. The above observations substantiate the superiority of our proposed scheme in terms of steady convergence and localization CRB guarantee.

\subsection{Performance Evaluation of Proposed Solution}
\vspace{-0.1cm}
Finally, we move on to evaluating the optimized communication performance of our proposed scheme under different setups, also in comparison with both benchmarks. 
To start with, Fig.~\ref{crb_value} evaluates the impact of CRB limitations $\xi_l$ on optimized communication performance. 
\begin{figure}[!t]
	\centering
	\includegraphics[width=0.6\linewidth, trim= 0 20 0 20 ]{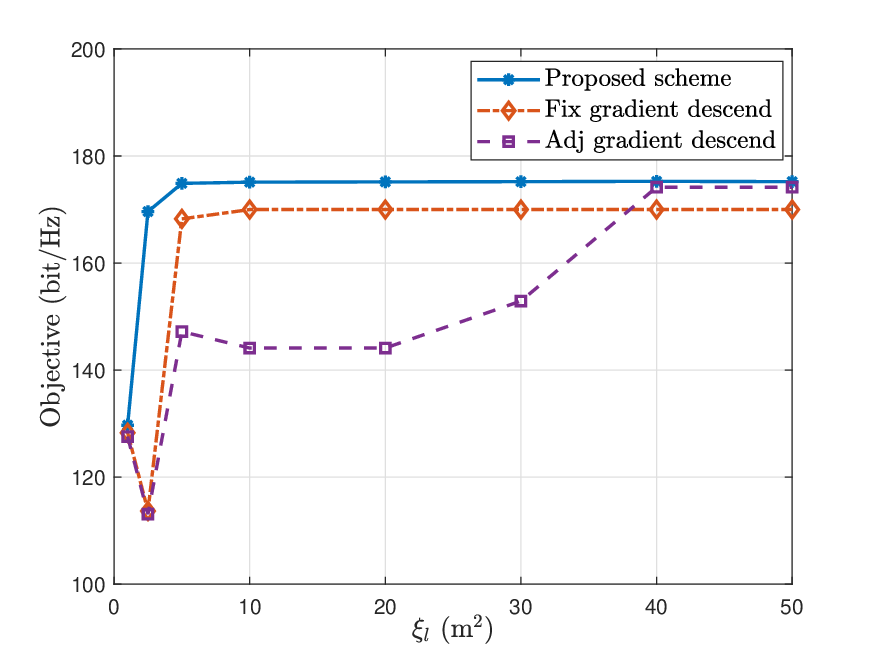}
	\caption{Communication performance comparison under different CRB limitations $\xi_l$.}\label{crb_value}
		\vspace{-.5cm}
\end{figure}
As illustrated, 
when the value of $\xi_l$ increases, i.e., the CRB constraint is looser, the communication performance of our proposed scheme increases smoothly and always outperforms 
that of benchmarks. 
When the $\xi_l$ is relatively small, relaxing CRB constraint significantly enhances the degree of freedom for UAV movement, thereby markedly improving communication performance. When $\xi_l$ is getting sufficiently large, the critical factor restricting the degree of freedom for UAV movement becomes the target detection constraint \eqref{eq:op_con2_new}. At this point, increasing the value of $\xi_l$ does not substantially affect the optimized trajectory of the UAV, as well as the communication performance. However, unlike the proposed scheme showing stable performance improvement with respect to 
$\xi_l$, the communication performances of the two benchmarks have shown significant fluctuations, depicted in Fig.~\ref{crb_value}. 
This is also due to the improper approximations for localization CRB, 
which leads benchmarks highly susceptible to get trapped into low-performance local points. 
This phenomenon is more pronounced when the value of $\xi_l$ is smaller, indicating higher requirements for positioning performance, which suggests that the benchmarks have application limitations in scenarios demanding high positioning accuracy. 

To reveal more deep insights on the performance differences in Fig.~\ref{crb_value}, we present in Fig.~\ref{trajectory} 
the optimized UAV trajectories of our proposed scheme and the two benchmarks, under different CRB requirements $\xi_l$. 
\begin{figure}[!t]
 \centering
    \subfigure[$\xi_l=2.5$]{
 \vspace{-.2cm}
  \includegraphics[width=0.6\linewidth, trim=0 10 0 0]{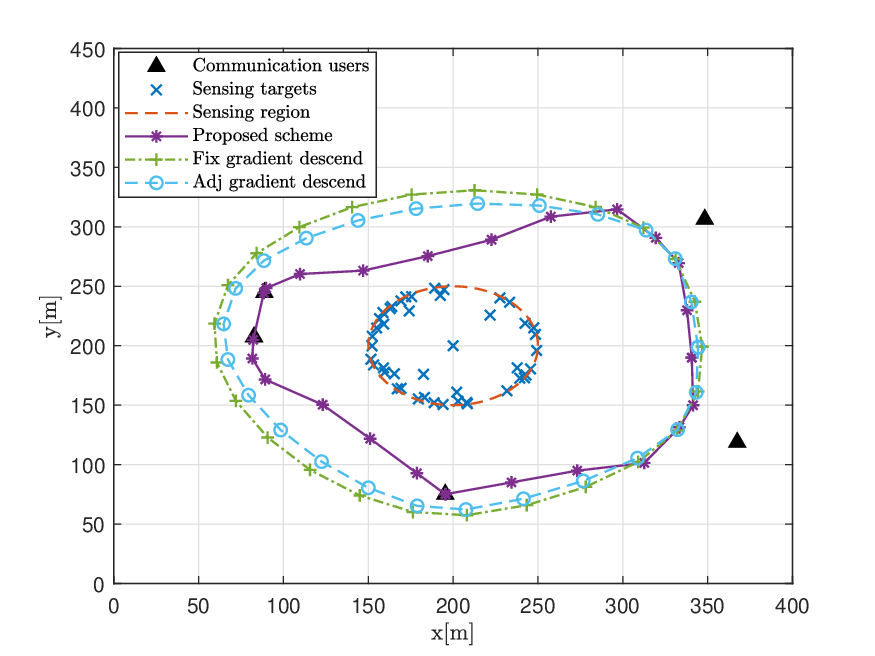}}
 \subfigure[$\xi_l=10$]{
  \includegraphics[width=0.6\linewidth, trim=0 10 0 0]{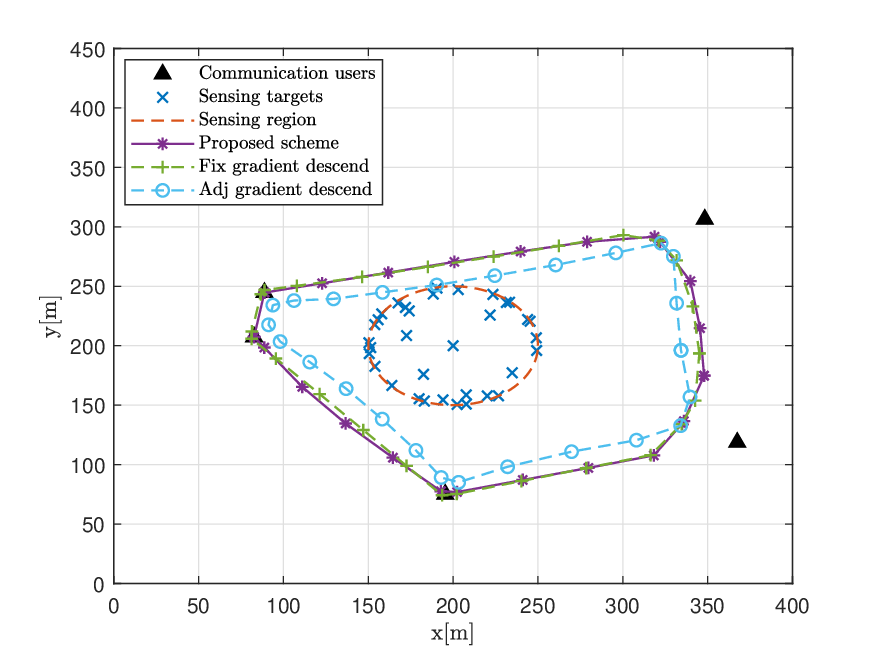}}
   \vspace{-.2cm}
 \caption{Optimized UAV trajectories of different schemes with $\xi_l=2.5$ in (a) and $\xi_l=10$ in (b).}
\label{trajectory}
 \vspace{-.5cm}
\end{figure}
In both Fig.~\ref{trajectory}(a) and Fig.~\ref{trajectory}(b), the trajectories from our proposed scheme have been optimized closer to the communication users while strictly fulfilling the on-demand sensing constraints, such that obvious performance advantages of our proposed scheme can be observed in Fig.~\ref{crb_value}. Conversely, due to the improper CRB approximation, the two benchmarks in Fig.~\ref{trajectory}(a), easily get trapped into local trajectory solutions, showing much worse communication performance owing to the longer communication distances. Similar conclusion can be found out in Fig.~\ref{trajectory}(b) for the trajectory optimized via Adj gradient descend scheme. Particularly observing the optimized trajectory from our proposed scheme, we can discover that while CRB requirement increases from $\xi_l=2.5$ to $\xi_l=10$, the UAV trajectory exhibits notable changes reflecting the trade-off between communication performance and sensing accuracy. In particular, 
the trajectory of the proposed scheme becomes more attracted to the sensing region in Fig.~\ref{trajectory}(a) with a stricter CRB requirement, and more tightened towards communication users in Fig.~\ref{trajectory}(b) with a lager maximum allowable CRB value, indicating an interesting shift of optimization focus from sensing to communication. 
As shown in Fig.~\ref{trajectory}(a), corresponding to a stricter CRB requirement, $\xi_l=2.5$, the proposed scheme demonstrates a 
UAV flight path adjusted 
towards the sensing region, striking a balance between sensing and communication needs, to adapt to the stricter CRB constraint. 
When the CRB constraint is relaxed to $\xi_l=10$ in Fig.~\ref{trajectory}(b), the requirement for localization accuracy becomes less stringent, allowing more freedom in UAV movement 
with reduced focus on sensing region, 
resulting in the more straight trajectory between communication users. 
The above observations highlight that our proposed scheme is capable of effectively balancing the trade-off between communication quality and sensing accuracy, outperforming the benchmarks in throughput while maintaining satisfactory levels of sensing accuracy.

At last, in Fig.~\ref{targeted_region_value}, we 
demonstrate the communication performance of all three schemes, 
under different range sizes of targeted region $\mathcal{D}$. 
\begin{figure}[!t]
	\centering
	\includegraphics[width=0.6\linewidth, trim= 0 20 0 20 ]{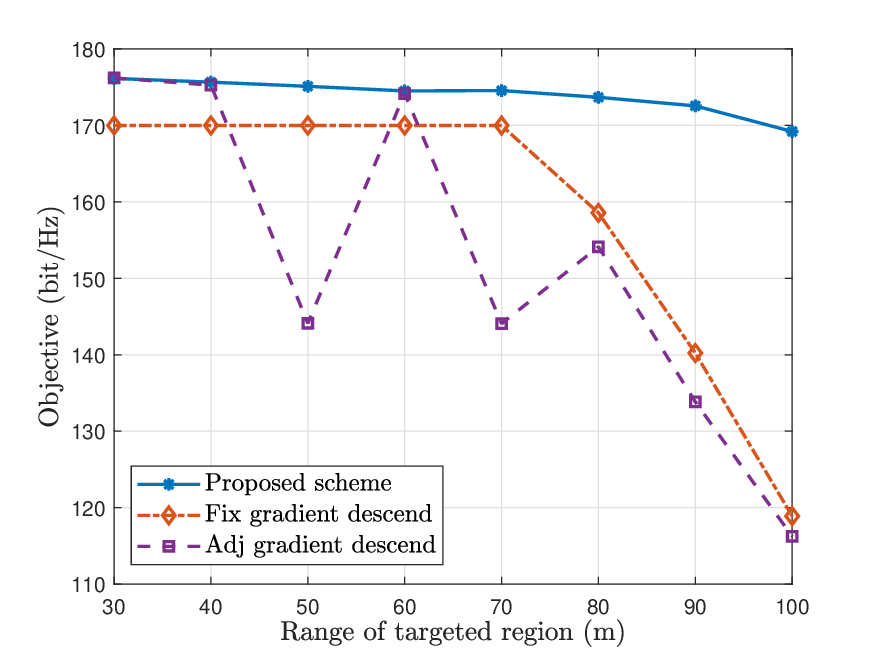}
	\caption{Communication performance comparison under different targeted region size.}\label{targeted_region_value}
		\vspace{-.5cm}
\end{figure}
It can be seen from this figure that with enlarged serving region $\mathcal D$, the communication performance of our proposed scheme exhibits a gradual decline. Indeed, a larger region $\mathcal D$ induces stricter sensing constraints, such that the flight of UAV is more restricted, 
leading to a degradation in the service performance for communication users. 
Compared to our proposed scheme, the benchmarks are much  more sensitive to the variation of targeted region. Especially when the region is sufficiently large, further increasing the sensing region size will result in a sharp decline in their communication performance. Additionally, it can be observed that under certain parameter settings, 
the Adj gradient descend scheme can potentially achieve performance very close to that of our proposed scheme. But, under most parameter settings, its performance is the worst among all three schemes. This is also due to the non-strict constraint approximation and adjustable iteration step sizes, which may cause that the algorithm converges to a high-quality solution in the initial few iterations but are more likely to trap it into local optimal solution, resulting in poor algorithmic stability. The above observations not only reveal the limitation of benchmarks in addressing the system design for large-scale ISAC networks, but also demonstrates the powerful capability of our proposed scheme in balancing the demands of sensing and communication.

\section{Conclusion}
In this work, we have investigated a UAV-enabled ISAC network, 
where the UAV follows a periodical circular trajectory to provide continuous on-demand sensing service support toward ground targets. We considered both on-demand detection and on-demand target localization in UAV-enabled ISAC, and focused on enhancing the communication performance while strictly fulfilling the accuracy constraints of these on-demand sensing tasks. To deal with the resulted infinite sensing constraints, we formulated the on-demand detection constraint to a deployment area constraint for UAV and devised finite reference target points sufficiently representing the whole serving region for on-demand localization tasks. 
Next, to address the reformulated problem with high optimization difficulties, we constructed a tight convex approximation for the whole problem, which guaranteed strict fulfillment of these sensing constraints. Based on that, we proposed an iterative algorithm which constantly improves the communication performance until convergence to an efficient suboptimal solution. The numerical results verifies the effectiveness of our proposed scheme for guaranteeing sensing QoS, especially a satisfactory CRB for on-demand localization. The high profits of our obtained solution in improving the communication throughput were also illustrated. 

In this work, our proposed design framework for 
UAV-enabled ISAC can potentially facilitate more on-demand sensing investigations in ISAC networks. 
For localization services,
we have developed an optimization scheme 
accurately tackling the localization CRB, which can be implemented in various ISAC scenarios toward strict localization accuracy guarantee and promote the corresponding network designs.\vspace{-0.1cm}

\bibliographystyle{IEEEtran}

\begin{thebibliography}{10}
\providecommand{\url}[1]{#1}
\csname url@samestyle\endcsname
\providecommand{\newblock}{\relax}
\providecommand{\bibinfo}[2]{#2}
\providecommand{\BIBentrySTDinterwordspacing}{\spaceskip=0pt\relax}
\providecommand{\BIBentryALTinterwordstretchfactor}{4}
\providecommand{\BIBentryALTinterwordspacing}{\spaceskip=\fontdimen2\font plus
\BIBentryALTinterwordstretchfactor\fontdimen3\font minus
  \fontdimen4\font\relax}
\providecommand{\BIBforeignlanguage}[2]{{%
\expandafter\ifx\csname l@#1\endcsname\relax
\typeout{** WARNING: IEEEtran.bst: No hyphenation pattern has been}%
\typeout{** loaded for the language `#1'. Using the pattern for}%
\typeout{** the default language instead.}%
\else
\language=\csname l@#1\endcsname
\fi
#2}}
\providecommand{\BIBdecl}{\relax}
\BIBdecl

\bibitem{isac_1}
A. Liu, Z. Huang, M. Li et al., ``A Survey on Fundamental Limits of Integrated Sensing and Communication," \emph{IEEE Communications Surveys \& Tutorials}, vol. 24, no. 2, pp. 994-1034, Secondquarter 2022.

\bibitem{isac_2}
A. Kaushik, R. Singh, S. Dayarathna et al., ``Toward Integrated Sensing and Communications for 6G: Key Enabling Technologies, Standardization, and Challenges," \emph{IEEE Communications Standards Magazine}, vol. 8, no. 2, pp. 52-59, June 2024.

\bibitem{isac_3}
Y. Cui, F. Liu, X. Jing and J. Mu, ``Integrating Sensing and Communications for Ubiquitous IoT: Applications, Trends, and Challenges," \emph{IEEE Network}, vol. 35, no. 5, pp. 158-167, September/October 2021.

\bibitem{isac_4}
S. Li, W. Yuan, C. Liu et al., ``A Novel ISAC Transmission Framework Based on Spatially-Spread Orthogonal Time Frequency Space Modulation,'' \emph{IEEE Journal on Selected Areas in Communications}, vol. 40, no. 6, pp. 1854-1872, June 2022.

\bibitem{isac_5}
T. Huang, N. Shlezinger, X. Xu, et al., ``MAJoRCom: A Dual-Function Radar Communication System Using Index Modulation," \emph{IEEE Transactions on Signal Processing}, vol. 68, pp. 3423-3438, 2020.

\bibitem{isac_6}
Z. Xiao and Y. Zeng, ``Waveform Design and Performance Analysis for Full-Duplex Integrated Sensing and Communication," \emph{IEEE Journal on Selected Areas in Communications}, vol. 40, no. 6, pp. 1823-1837, June 2022.

\bibitem{isac_7}
P. Kumari, S. A. Vorobyov and R. W. Heath, ``Adaptive Virtual Waveform Design for Millimeter-Wave Joint Communication–Radar," \emph{IEEE Transactions on Signal Processing}, vol. 68, pp. 715-730, 2020.

\bibitem{isac_8}
P. Li, M. Li, R. Liu et al., ``MIMO-OFDM ISAC Waveform Design for Range-Doppler Sidelobe Suppression," \emph{IEEE Transactions on Wireless Communications}, early access. 

\bibitem{isac_9}
J. Zhang, S. Gong, W. Lu et al., ``Joint Design for STAR-RIS Aided ISAC: Decoupling or Learning," \emph{IEEE Transactions on Wireless Communications}, vol. 23, no. 10, pp. 14365-14379, Oct. 2024.

\bibitem{isac_10}
X. Yang, Z. Wei, Y. Liu et al., ``RIS-Assisted Cooperative Multicell ISAC Systems: A Multi-User and Multi-Target Case," \emph{IEEE Transactions on Wireless Communications}, vol. 23, no. 8, pp. 8683-8699, Aug. 2024.

\bibitem{isac_11}
F. Liu, Y. Cui, C. Masouros et al., ``Integrated Sensing and Communications: Toward Dual-Functional Wireless Networks for 6G and Beyond," \emph{IEEE Journal on Selected Areas in Communications}, vol. 40, no. 6, pp. 1728-1767, June 2022.

\bibitem{local_1}
N. Zhao, Q. Chang, X. Shen et al., ``Joint Target Localization and Data Detection in Bistatic ISAC Networks," \emph{IEEE Transactions on Communications}, early access.

\bibitem{local_2}
P. Gao, L. Lian and J. Yu, ``Cooperative ISAC With Direct Localization and Rate-Splitting Multiple Access Communication: A Pareto Optimization Framework," \emph{IEEE Journal on Selected Areas in Communications}, vol. 41, no. 5, pp. 1496-1515, May 2023.

\bibitem{local_3}
Z. Yu, X. Hu, C. Liu et al., ``Location Sensing and Beamforming Design for IRS-Enabled Multi-User ISAC Systems," \emph{IEEE Transactions on Signal Processing}, vol. 70, pp. 5178-5193, 2022.

\bibitem{local_4}
F. Dong and F. Liu, ``Localization as a Service in Perceptive Networks: An ISAC Resource Allocation Framework," \emph{2022 IEEE International Conference on Communications Workshops (ICC Workshops)}, Seoul, Korea, Republic of, 2022, pp. 848-853.

\bibitem{uav_1}
Y. Zeng, Q. Wu and R. Zhang, ``Accessing From the Sky: A Tutorial on UAV Communications for 5G and Beyond," \emph{Proceedings of the IEEE}, vol. 107, no. 12, pp. 2327-2375, Dec. 2019.

\bibitem{add_1}
Z Lyu, Y Gao, J Chen et al., ``Empowering Intelligent Low-altitude Economy with Large AI Model Deployment'', \emph{arXiv preprint} arXiv:2505.22343, 2025.

\bibitem{add_2}
J Xu, Z Lyu, X Song et. al., ``ISAC with Emerging Communication Technologies'', \emph{Integrated Sensing and Communications}, Springer, Singapore, 2023.


\bibitem{uav_2}
X. Pang, S. Guo, J. Tang et al., ``Dynamic ISAC Beamforming Design for UAV-Enabled Vehicular Networks," \emph{IEEE Transactions on Wireless Communications}, vol. 23, no. 11, pp. 16852-16864, Nov. 2024.

\bibitem{uav_3}
K. Meng, Q. Wu, S. Ma et al., ``UAV Trajectory and Beamforming Optimization for Integrated Periodic Sensing and Communication," \emph{IEEE Wireless Communications Letters}, vol. 11, no. 6, pp. 1211-1215, June 2022.

\bibitem{uav_4}
Z. Lyu, G. Zhu and J. Xu, ``Joint Maneuver and Beamforming Design for UAV-Enabled Integrated Sensing and Communication," \emph{IEEE Transactions on Wireless Communications}, vol. 22, no. 4, pp. 2424-2440, April 2023.

\bibitem{uav_5}
K. Zhang and C. Shen, ``UAV Aided Integrated Sensing and Communications," \emph{2021 IEEE 94th Vehicular Technology Conference (VTC2021-Fall)}, Norman, OK, USA, 2021, pp. 1-6.

\bibitem{uav_6}
Y. Zeng, R. Zhang and T. J. Lim, ``Wireless communications with unmanned aerial vehicles: opportunities and challenges," \emph{IEEE Communications Magazine}, vol. 54, no. 5, pp. 36-42, May 2016.

\bibitem{Gu_1}
S. Gu, C. Luo, Y. Luo and X. Ma, ``Jointly Optimize Throughput and Localization Accuracy: UAV Trajectory Design for Multiuser Integrated Communication and Sensing," \emph{IEEE Int. Things J.}, vol. 11, no. 24, pp. 39497-39511, 15 Dec.15, 2024.

\bibitem{Jing_1}
X. Jing, F. Liu, C. Masouros and Y. Zeng, ``ISAC From the Sky: UAV Trajectory Design for Joint Communication and Target Localization," \emph{IEEE Trans. Wireless Communs.}, vol. 23, no. 10, pp. 12857-12872, Oct. 2024.

\bibitem{free_space1}
X. Yuan, Y. Hu and A. Schmeink, ``Joint Design of UAV Trajectory and Directional Antenna Orientation in UAV-Enabled Wireless Power Transfer Networks," \emph{IEEE Journal on Selected Areas in Communications}, vol. 39, no. 10, pp. 3081-3096, Oct. 2021.


\bibitem{free_space2}
R. Chai, X. Cui, R. Sun et al., ``Precoding and Trajectory Design for UAV-Assisted Integrated Communication and Sensing Systems," \emph{IEEE Transactions on Vehicular Technology}, vol. 73, no. 9, pp. 13151-13163, Sept. 2024.

\bibitem{sensing_1}
Y. Ji, C. Dong, X. Zhu and Q. Wu, ``Fair-Energy Trajectory Planning for Cooperative UAVs to Locate Multiple Targets," \emph{ICC 2019 - 2019 IEEE International Conference on Communications (ICC)}, Shanghai, China, 2019, pp. 1-7.

\bibitem{sensing_2}
F. Liu, W. Yuan, C. Masouros and J. Yuan, ``Radar-Assisted Predictive Beamforming for Vehicular Links: Communication Served by Sensing," \emph{IEEE Transactions on Wireless Communications}, vol. 19, no. 11, pp. 7704-7719, Nov. 2020.

\bibitem{sensing_3}
S. M. Kay, Steven M, ``Fundamentals of Statistical Signal Processing: Estimation Theory," \emph{Prentice Hall}, 1993.

\bibitem{our_1}
X. Yuan, Y. Hu, J. Zhang and A. Schmeink, ``Joint User Scheduling and UAV Trajectory Design on Completion Time Minimization for UAV-Aided Data Collection," \emph{IEEE Transactions on Wireless Communications}, vol. 22, no. 6, pp. 3884-3898, June 2023.

\bibitem{ine_0}
T. Wigren, ``The Cauchy-Schwarz Inequality: Proofs and Applications in Various Spaces," 2015.


\end{thebibliography}


\end{document}